\newcommand{\CG}{\textrm{\tiny CG}}
\newcommand{\T}{^{\raisebox{0.09em}{\hbox{\tiny $T$}}\hspace{-0.1em}}}
\title{LSRN: A Parallel Iterative Solver for Strongly
    \\ Over- or Under-Determined Systems}
\author{Xiangrui Meng%
  \thanks{ICME, Stanford University, Stanford, CA 94305 (mengxr@stanford.edu).
    Partially supported by the U.S. Army Research Laboratory, through the
    Army High Performance Computing Research Center, Cooperative Agreement W911NF-07-0027
    and by NSF grant DMS-1009005.}
  \and Michael A. Saunders%
  \thanks{Systems Optimization Laboratory, Department of Management Science and
    Engineering, Stanford University, Stanford CA 94305 (saunders@stanford.edu).
    Partially supported by the U.S. Army Research Laboratory, through the
    Army High Performance Computing Research Center, Cooperative Agreement W911NF-07-0027
    and by NSF grant DMS-1009005.}
  \and Michael W. Mahoney%
  \thanks{Department of Mathematics, Stanford University, CA 94305 (mmahoney@cs.stanford.edu).
    Partially supported by NSF grant DMS-1009005.}}
\begin{document}

\maketitle

\begin{abstract}
  We describe a parallel iterative least squares solver named \texttt{LSRN} that
  is based on random normal projection.  \texttt{LSRN} computes the
  min-length solution to $\min_{x \in \mathbb{R}^n} \|A x - b\|_2$, where $A \in
  \mathbb{R}^{m \times n}$ with $m \gg n$ or $m \ll n$, and where $A$ may be
  rank-deficient. Tikhonov regularization may also be included. Since $A$ is
  only involved in matrix-matrix and matrix-vector multiplications, it can be a
  dense or sparse matrix or a linear operator, and \texttt{LSRN} automatically
  speeds up when $A$ is sparse or a fast linear operator. The preconditioning
  phase consists of a random normal projection, which is embarrassingly
  parallel, and a singular value decomposition of size $\lceil \gamma \min(m,n)
  \rceil \times \min(m,n)$, where $\gamma$ is moderately larger than $1$, e.g.,
  $\gamma = 2$.  We prove that the preconditioned system is well-conditioned,
  with a strong concentration result on the extreme singular values, and hence
  that the number of iterations is fully predictable when we apply LSQR or the
  Chebyshev semi-iterative method.  As we demonstrate, the Chebyshev method is
  particularly efficient for solving large problems on clusters with high
  communication cost.  Numerical results demonstrate that on a shared-memory
  machine, \texttt{LSRN} outperforms LAPACK's DGELSD on large dense problems,
  and MATLAB's backslash (SuiteSparseQR) on sparse problems. Further experiments
  demonstrate that \texttt{LSRN} scales well on an Amazon Elastic Compute Cloud
  cluster.
\end{abstract}

\begin{keywords}
  linear least squares, over-determined system, under-determined system,
  rank-deficient, minimum-length solution, LAPACK, sparse matrix, iterative
  method, preconditioning, LSQR, Chebyshev semi-iterative method, Tikhonov
  regularization, ridge regression, parallel computing, random projection,
  random sampling, random matrix, randomized algorithm
\end{keywords}

\begin{AMS}
  65F08,                        
  65F10,                        
  65F20,                        
  65F22,                        
  65F35,                        
  65F50,                        
  15B52                         
\end{AMS}

\begin{DOI}
   xxx/xxxxxxxxx
\end{DOI}

\section{Introduction}
\label{sec:introduction}

Randomized algorithms have become indispensable in many areas of computer
science, with applications ranging from complexity theory to combinatorial
optimization, cryptography, and machine learning.  Randomization has also been
used in numerical linear algebra (for instance, the initial vector in the power
iteration is chosen at random so that almost surely it has a nonzero component
along the direction of the dominant eigenvector), yet most well-developed matrix
algorithms, e.g., matrix factorizations and linear solvers, are deterministic.
In recent years, however, motivated by large data problems, very
nontrivial randomized algorithms for very large matrix problems have drawn
considerable attention from researchers, originally in theoretical computer
science and subsequently in numerical linear algebra and scientific computing.
By randomized algorithms, we refer in particular to random sampling and random
projection algorithms
\cite{drineas2006sampling,sarlos2006improved,drineas2007faster,rokhlin2008fast,avron2010blendenpik}.
For a comprehensive overview of these developments, see the review of
Mahoney~\cite{Mah-mat-rev_BOOK}, and for an excellent overview of numerical
aspects of coupling randomization with classical low-rank matrix factorization
methods, see the review of Halko, Martinsson, and Tropp~\cite{halko2011finding}.

Here, we consider high-precision solving of linear least squares (LS) problems
that are strongly over- or under-determined, and possibly rank-deficient.  In
particular, given a matrix $A \in \mathbb{R}^{m \times n}$ and a vector $b \in
\mathbb{R}^m$, where $m \gg n$ or $m \ll n$ and we do not assume that $A$ has
full rank, we wish to develop randomized algorithms to compute accurately the
unique min-length solution to the problem
\begin{equation}
  \label{eq:ls}
  \text{minimize}_{x \in \mathbb{R}^n}  \quad \| A x - b \|_2  .
\end{equation}
If we let $r = \rank(A) \leq \min(m,n)$, then recall that if $r < n$ (the LS
problem is under-determined or rank-deficient), then \eqref{eq:ls} has an
infinite number of minimizers.  In that case, the set of all minimizers is
convex and hence has a unique element having minimum length.  On the other hand,
if $r = n$ so the problem has full rank, there exists only one minimizer
to \eqref{eq:ls} and hence it must have the minimum length.  In either case, we
denote this unique min-length solution to \eqref{eq:ls} by $x^*$.  That is,
\begin{eqnarray}
  \label{eq:ls_min_length}
  x^* = \arg \min \| x \|_2 \quad \text{subject to} \quad x \in \arg \min_z \| A z - b \|_2.
\end{eqnarray}
LS problems of this form have a long history, tracing back to Gauss, and they
arise in numerous applications.  The demand for faster LS solvers will continue
to grow in light of new data applications and as problem scales become larger
and larger.

In this paper, we describe an LS solver called \texttt{LSRN} for these strongly
over- or under-determined, and possibly rank-deficient, systems. \texttt{LSRN}
uses random normal projections to compute a preconditioner matrix such that the
preconditioned system is provably extremely well-conditioned.  Importantly for
large-scale applications, the preconditioning process is embarrassingly
parallel, and it automatically speeds up with sparse matrices and fast linear
operators.  LSQR~\cite{paige1982lsqr} or the Chebyshev semi-iterative (CS)
method~\cite{golub1961chebyshev} can be used at the iterative step to compute
the min-length solution within just a few iterations.  We show that the latter
method is preferred on clusters with high communication cost.

Because of its provably-good conditioning properties, \texttt{LSRN} has a fully
predictable run-time performance, just like direct solvers, and it scales well
in parallel environments. On large dense systems, \texttt{LSRN} is faster than
LAPACK's DGELSD for strongly over-determined problems, and is much faster for
strongly under-determined problems, although solvers using fast random
projections, like Blendenpik~\cite{avron2010blendenpik}, are still slightly
faster in both cases.  On sparse systems, \texttt{LSRN} runs significantly
faster than competing solvers, for both the strongly over- or under-determined
cases.

In section \ref{sec:linear-least-squares} we describe existing deterministic LS
solvers and recent randomized algorithms for the LS problem.  In section
\ref{sec:prec-line-least} we show how to do preconditioning correctly for
rank-deficient LS problems, and in section \ref{sec:prec-via-rand} we introduce
\texttt{LSRN} and discuss its properties.  Section \ref{sec:regularization}
describes how \texttt{LSRN} can handle Tikhonov regularization for both over-
and under-determined systems, and in section \ref{sec:experiments} we provide a
detailed empirical evaluation illustrating the behavior of \texttt{LSRN}.

\section{Least squares solvers}
\label{sec:linear-least-squares}

In this section we discuss related work, including deterministic direct and
iterative methods as well as recently developed randomized methods, for
computing solutions to LS problems, and we discuss how our results fit into this
broader context.

\subsection{Deterministic methods}
\label{sec:deter-method}

It is well known that $x^*$ in \eqref{eq:ls_min_length} can be computed using
the singular value decomposition (SVD) of $A$. Let $A = U \Sigma V\T$ be the
economy-sized SVD, where $U \in \mathbb{R}^{m \times r}$, $\Sigma \in
\mathbb{R}^{r \times r}$, and $V \in \mathbb{R}^{n \times r}$. We have $x^* = V
\Sigma^{-1} U\T b$.  The matrix $V \Sigma^{-1} U\T$ is the Moore-Penrose
pseudoinverse of $A$, denoted by $A^\dagger$. The pseudoinverse is defined and
unique for any matrix. Hence we can simply write $x^* = A^\dagger b$. The SVD
approach is accurate and robust to rank-deficiency.

Another way to solve \eqref{eq:ls_min_length} is using a complete orthogonal
factorization of $A$. If we can find orthonormal matrices $Q \in \mathbb{R}^{m
  \times r}$ and $Z \in \mathbb{R}^{n \times r}$, and a matrix $T \in
\mathbb{R}^{r \times r}$, such that $A = Q T Z\T$, then the min-length solution
is given by $x^* = Z T^{-1} Q\T b$. We can treat SVD as a special case of
complete orthogonal factorization. In practice, complete orthogonal
factorization is usually computed via rank-revealing QR factorizations, making
$T$ a triangular matrix. The QR approach is less expensive than SVD, but it is
slightly less robust at determining the rank of $A$.

A third way to solve \eqref{eq:ls_min_length} is by computing the min-length
solution to the normal equation $A\T A x = A\T b$, namely
\begin{equation}
  \label{eq:ls_ne_min_length}
  x^* = (A\T A)^\dagger A\T b = A\T (A A\T)^\dagger b.
\end{equation}
It is easy to verify the correctness of \eqref{eq:ls_ne_min_length} by replacing
$A$ by its economy-sized SVD $U \Sigma V\T$. If $r = \min(m,n)$, a Cholesky
factorization of either $A\T A$ (if $m \geq n$) or $AA\T$ (if $m \leq n$) solves
\eqref{eq:ls_ne_min_length} nicely. If $r < \min(m,n)$, we need the eigensystem
of $A\T A$ or $A A\T$ to compute $x^*$.  The normal equation approach is the
least expensive among the three direct approaches we have mentioned, especially
when $m \gg n$ or $m \ll n$, but it is also the least accurate one, especially
on ill-conditioned problems.  See Chapter~5 of Golub and Van Loan
\cite{golub1996matrix} for a detailed analysis.

Instead of these direct methods, we can use iterative methods to solve
\eqref{eq:ls}.  If all the iterates $\{ x^{(k)} \}$ are in $\text{range}(A\T)$
and if $\{ x^{(k)} \}$ converges to a minimizer, it must be the minimizer having
minimum length, i.e., the solution to \eqref{eq:ls_min_length}.  This is the
case when we use a Krylov subspace method starting with a zero vector.  For
example, the conjugate gradient (CG) method on the normal equation leads to the
min-length solution (see Paige and Saunders~\cite{paige1975solution}).  In
practice, CGLS~\cite{hestenesmethods}, LSQR~\cite{paige1982lsqr} are preferable
because they are equivalent to applying CG to the normal equation in exact
arithmetic but they are numerically more stable. Other Krylov subspace methods
such as the CS method~\cite{golub1961chebyshev}
and LSMR~\cite{fong2011lsmr} can solve \eqref{eq:ls} as well.

Importantly, however, it is in general hard to predict the number of iterations
for CG-like methods.  The convergence rate is affected by the condition number
of $A\T A$.  A classical result \cite[p.187]{luenberger1973introduction} states
that
\begin{equation}
  \label{eq:cg_convergence_rate}
  \frac{\| x^{(k)} - x^* \|_{A\T A}}{\|x^{(0)} - x^*\|_{A\T A}} \leq 2 \left( \frac{\sqrt{\kappa(A\T A)} - 1}{\sqrt{\kappa(A\T A)} + 1} \right)^k,
\end{equation}
where $\| z \|_{A\T A} = z\T A\T A z = \|A z\|^2$ for any $z \in \mathbb{R}^n$,
and where $\kappa(A\T A)$ is the condition number of $A\T A$ under the $2$-norm.
Estimating $\kappa(A\T A)$ is generally as hard as solving the LS problem
itself, and in practice the bound does not hold in any case unless
reorthogonalization is used.  Thus, the computational cost of CG-like methods
remains unpredictable in general, except when $A\T A$ is very well-conditioned
and the condition number can be well estimated.

\subsection{Randomized methods}


In 2007, Drineas, Mahoney, Muthukrishnan, and
Sarl{\'o}s~\cite{drineas2007faster} introduced two randomized algorithms for the
LS problem, each of which computes a relative-error approximation to the
min-length solution in $\mathcal{O}(m n \log n)$ time, when $m \gg n$.  Both of
these algorithms apply a randomized Hadamard transform to the columns of $A$,
thereby generating a problem of smaller size, one using uniformly random
sampling and the other using a sparse random projection.  They proved that, in
both cases, the solution to the smaller problem leads to relative-error
approximations of the original problem.  The accuracy of the approximate
solution depends on the sample size; and to have relative precision
$\varepsilon$, one should sample $\mathcal{O}(n/\varepsilon)$ rows after the
randomized Hadamard transform.  This is suitable when low accuracy is
acceptable, but the $\varepsilon$ dependence quickly becomes the bottleneck
otherwise.  Using those algorithms as preconditioners was also mentioned in
\cite{drineas2007faster}. This work laid the ground for later algorithms and
implementations.


Later, in 2008, Rokhlin and Tygert \cite{rokhlin2008fast} described a related
randomized algorithm for over-determined systems.  They used a randomized
transform named SRFT that consists of $m$ random Givens rotations, a random
diagonal scaling, a discrete Fourier transform, and a random sampling.  They
considered using their method as a preconditioning method, and they showed that
to get relative precision $\varepsilon$, only $\mathcal{O}(n \log (1 /
\varepsilon))$ samples are needed.  In addition, they proved that if the sample
size is greater than $4 n^2$, the condition number of the preconditioned system
is bounded above by a constant.  Although choosing this many samples would
adversely affect the running time of their solver, they also illustrated
examples of input matrices for which the $4 n^2$ sample bound was weak and for
which many fewer samples sufficed.


Then, in 2010, Avron, Maymounkov, and Toledo~\cite{avron2010blendenpik}
implemented a high-precision LS solver, called Blendenpik, and compared it to
LAPACK's DGELS and to LSQR with no preconditioning.  Blendenpik uses a
Walsh-Hadamard transform, a discrete cosine transform, or a discrete Hartley
transform for blending the rows/columns, followed by a random sampling, to
generate a problem of smaller size.  The $R$ factor from the QR factorization of
the smaller matrix is used as the preconditioner for LSQR.  Based on their
analysis, the condition number of the preconditioned system depends on the
coherence or statistical leverage scores of $A$, i.e., the maximal row norm of
$U$, where $U$ is an orthonormal basis of $\text{range}(A)$.  We note that a
solver for under-determined problems is also included in the Blendenpik package.


In 2011, Coakley, Rokhlin, and Tygert \cite{coakley2011fast} described an
algorithm that is also based on random normal projections. It computes the
orthogonal projection of any vector $b$ onto the null space of $A$ or onto the
row space of $A$ via a preconditioned normal equation.  The algorithm solves the
over-determined LS problem as an intermediate step. They show that the normal
equation is well-conditioned and hence the solution is reliable. For an
over-determined problem of size $m \times n$, the algorithm requires applying
$A$ or $A\T$\, $3 n + 6$ times, while \texttt{LSRN} needs approximately $2 n +
200$ matrix-vector multiplications under the default setting.
Asymptotically, \texttt{LSRN} will become faster as $n$ increases beyond
several hundred.
See section \ref{subsec:complexity} for further complexity analysis of
\texttt{LSRN}.

\subsection{Relationship with our contributions}

All prior approaches assume that $A$ has full rank, and for those based on
iterative solvers, none provides a tight upper bound on the condition number of
the preconditioned system (and hence the number of iterations). For
\texttt{LSRN}, Theorem~\ref{thm:ls_precond_sufficient} ensures that the
min-length solution is preserved, independent of the rank, and
Theorems~\ref{thm:cond_bound} and \ref{thm:iter} provide bounds on the condition
number and number of iterations, independent of the spectrum of $A$.  In
addition to handling rank-deficiency well, \texttt{LSRN} can even take advantage
of it, resulting in a smaller condition number and fewer iterations.

Some prior work on the LS problem has explored ``fast'' randomized transforms
that run in roughly $\mathcal{O}( m n \log m )$ time on a dense matrix $A$,
while the random normal projection we use in \texttt{LSRN} takes $\mathcal{O}( m
n^2 )$ time.  Although this could be an issue for some applications, the use of
random normal projections comes with several advantages.  First, if $A$ is a
sparse matrix or a linear operator, which is common in large-scale applications,
then the Hadamard-based fast transforms are no longer ``fast''. Second, the
random normal projection is easy to implement using threads or MPI, and it
scales well in parallel environments.  Third, the strong symmetry of the
standard normal distribution helps give the strong high probability bounds on
the condition number in terms of sample size. These bounds depend on nothing but
$s/r$, where $s$ is the sample size.  For example, if $s = 4r$,
Theorem~\ref{thm:cond_bound} ensures that, with high probability, the condition
number of the preconditioned system is less than $3$.

This last property about the condition number of the preconditioned system makes
the number of iterations and thus the running time of \texttt{LSRN} fully
predictable like for a direct method. It also enables use of the
CS method, which needs only one level-1 and two level-2
BLAS operations per iteration, and is particularly suitable for clusters with
high communication cost because it doesn't have vector inner products that
require synchronization between nodes.  Although the CS method has the same
theoretical upper bound on the convergence rate as CG-like methods, it requires
accurate bounds on the singular values in order to work efficiently.  Such
bounds are generally hard to come by, limiting the popularity of the CS method
in practice, but they are provided for the preconditioned system by our
Theorem~\ref{thm:cond_bound}, and we do achieve high efficiency in our
experiments.

\section{Preconditioning for linear least squares}
\label{sec:prec-line-least}

In light of \eqref{eq:cg_convergence_rate}, much effort has been made to
transform a linear system into an equivalent system with reduced condition
number. This \emph{preconditioning}, for a square linear system $B x = d$ of
full rank, usually takes one of the following forms:
\begin{align*}
  \label{eq:linear_precond}
  \text{left preconditioning} &\quad M\T B x = M\T d, \\
  \text{right preconditioning} &\quad B N y = d, ~ x = N y, \\
  \text{left and right preconditioning} &\quad M\T B N y = M\T d, ~ x = N y.
\end{align*}
Clearly, the preconditioned system is consistent with the original one, i.e.,
has the same $x^*$ as the unique solution, if the preconditioners $M$ and $N$
are nonsingular.

For the general LS problem \eqref{eq:ls_min_length}, preconditioning needs
better handling in order to produce the same min-length solution as the original
problem.  For example, if we apply left preconditioning to the LS problem
$\min_x \| A x - b \|_2$, the preconditioned system becomes $\min_x \| M\T A x -
M\T b \|_2$, and its min-length solution is given by
\begin{equation*}
  x^*_{\text{left}} = (M\T A)^\dagger M\T b.
\end{equation*}
Similarly, the min-length solution to the right preconditioned system is 
given by
\begin{equation*}
  x^*_{\text{right}} = N ( A N )^\dagger b.
\end{equation*}
The following lemma states the necessary and sufficient conditions for $A^\dagger
= N ( A N )^\dagger$ or $A^\dagger = ( M\T A )^\dagger M\T$ to hold. Note that
these conditions holding certainly imply that $x_{\text{right}}^* = x^*$ and
$x_{\text{left}}^* = x^*$, respectively.

\begin{lemma}
  \label{lemma:ls_precond}
  Given $A \in \mathbb{R}^{m \times n}$, $N \in \mathbb{R}^{n \times p}$ and $M \in
  \mathbb{R}^{m \times q}$, we have
  \begin{enumerate}
  \item $A^\dagger = N ( A N )^\dagger$ if and only if $\text{range}(N N\T A\T )
    = \text{range} ( A\T )$,
  \item $A^\dagger = ( M\T A )^\dagger M\T$ if and only if $\text{range}(M M\T
    A) = \text{range}(A)$.
  \end{enumerate}
\end{lemma}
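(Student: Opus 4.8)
The plan is to reduce both statements to two elementary facts about pseudoinverses, namely that for any matrix $B$ one has $\text{range}(B^\dagger) = \text{range}(B\T)$ and that $BB^\dagger$ is the orthogonal projector onto $\text{range}(B)$, together with the trivial identity $\text{range}(NB) = N(\text{range}(B))$. I would prove part~1 in detail and then obtain part~2 by transposition, using $(B^\dagger)\T = (B\T)^\dagger$.

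First I would record the key range identity
\[
  \text{range}\bigl(N(AN)^\dagger\bigr) = N\bigl(\text{range}((AN)^\dagger)\bigr) = N\bigl(\text{range}(N\T A\T)\bigr) = \text{range}(NN\T A\T),
\]
where the middle equality uses $\text{range}((AN)^\dagger) = \text{range}((AN)\T) = \text{range}(N\T A\T)$. This identity already settles the forward direction: if $A^\dagger = N(AN)^\dagger$, then taking ranges of both sides and using $\text{range}(A^\dagger) = \text{range}(A\T)$ yields $\text{range}(NN\T A\T) = \text{range}(A\T)$ immediately.

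For the converse, assume $\text{range}(NN\T A\T) = \text{range}(A\T)$. I would first extract a rank statement: since $\text{range}(NN\T A\T) = N(\text{range}(N\T A\T))$ has dimension $r = \rank(A)$ and a linear map cannot increase dimension, $\rank(N\T A\T) \ge r$; as $\rank(N\T A\T) = \rank(AN) \le r$, this forces $\rank(AN) = r$, and since $\text{range}(AN) \subseteq \text{range}(A)$ with equal dimensions, $\text{range}(AN) = \text{range}(A)$. Consequently $A\,N(AN)^\dagger = (AN)(AN)^\dagger$ is the orthogonal projector onto $\text{range}(AN) = \text{range}(A)$, which equals $AA^\dagger$; hence $A\bigl(N(AN)^\dagger - A^\dagger\bigr) = 0$, so every column of $N(AN)^\dagger - A^\dagger$ lies in $\text{null}(A) = \text{range}(A\T)^\perp$. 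On the other hand, the range identity gives $\text{range}(N(AN)^\dagger) = \text{range}(NN\T A\T) = \text{range}(A\T)$, and $\text{range}(A^\dagger) = \text{range}(A\T)$, so those same columns also lie in $\text{range}(A\T)$. Being in both $\text{range}(A\T)$ and its orthogonal complement, they vanish, giving $N(AN)^\dagger = A^\dagger$.

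Finally, for part~2 I would transpose: $A^\dagger = (M\T A)^\dagger M\T$ holds if and only if its transpose $(A\T)^\dagger = M(A\T M)^\dagger$ holds, which is exactly part~1 applied with $A\T$ in place of $A$ and $M$ in place of $N$. The corresponding condition $\text{range}(MM\T (A\T)\T) = \text{range}((A\T)\T)$ reads $\text{range}(MM\T A) = \text{range}(A)$, as claimed. I expect the only nonroutine step to be the rank argument that upgrades the range hypothesis into $\text{range}(AN) = \text{range}(A)$; everything else is bookkeeping with the two standard pseudoinverse identities.
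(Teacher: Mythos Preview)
Your proof is correct and takes a genuinely different route from the paper's. The paper works explicitly with the economy SVD $A = U\Sigma V\T$: for the ``if'' direction it writes $NN\T A\T = VZ$ with $Z$ of full row rank and then computes $N(AN)^\dagger$ directly using the product rule $(BC)^\dagger = C^\dagger B^\dagger$ (valid when $B$ has orthonormal columns, or when $B$ has full column rank and $C$ full row rank); for ``only if'' it decomposes $N = VZ + V_cZ_c$ and shows $Z_cZ\T = 0$ by an explicit calculation. Your argument is coordinate-free: the single identity $\text{range}(N(AN)^\dagger) = \text{range}(NN\T A\T)$ handles the forward direction in one line, and for the converse you first upgrade the hypothesis to $\text{range}(AN) = \text{range}(A)$ via a rank count, then use that $(AN)(AN)^\dagger$ and $AA^\dagger$ are the same orthogonal projector to force the columns of $N(AN)^\dagger - A^\dagger$ into $\text{null}(A)\cap\text{range}(A\T) = \{0\}$. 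This avoids both the SVD and the pseudoinverse product rules, relying only on $\text{range}(B^\dagger) = \text{range}(B\T)$ and the projector interpretation of $BB^\dagger$. The paper's approach has the advantage of making the structure of $N(AN)^\dagger$ explicit (which foreshadows the spectral analysis in Lemma~\ref{lemma:spectrum}), while yours is shorter and isolates exactly which abstract properties of the pseudoinverse are needed. Both proofs reduce part~2 to part~1 by the same transposition trick.
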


\begin{proof} Let $r = \rank(A)$ and $U \Sigma V\T$ be $A$'s economy-sized SVD
  as in section \ref{sec:deter-method}, with $A^\dagger = V \Sigma^{-1}
  U\T$. Before continuing our proof, we reference the following facts about the
  pseudoinverse:
  \begin{enumerate}
  \item $B^\dagger = B\T ( B B\T )^\dagger$ for any matrix $B$,
  \item For any matrices $B$ and $C$ such that $B C$ is defined, $(B C)^\dagger
    = C^\dagger B^\dagger$ if
    \begin{inparaenum}[(i)]
    \item $B\T B = I$ or
    \item $C C\T = I$ or
    \item $B$ has full column rank and $C$ has full row rank.
    \end{inparaenum}
  \end{enumerate}
  Now let's prove the ``if'' part of the first statement. If $\text{range}(N N\T
  A\T ) = \text{range} ( A\T ) = \text{range}(V)$, we can write $N N\T A\T$ as
  $V Z$ where $Z$ has full row rank. Then,
  \begin{eqnarray*}
    N ( A N )^\dagger &=& N ( A N )\T ( A N ( A N )\T )^\dagger = N N\T A\T  ( A N N\T A\T )^\dagger \\ &=& V Z ( U \Sigma V\T V Z )^\dagger = V Z ( U \Sigma Z )^\dagger = V Z Z^\dagger \Sigma^{-1} U\T = V \Sigma^{-1} U\T = A^\dagger.
  \end{eqnarray*}
  Conversely, if $N(AN)^\dagger = A^\dagger$, we know that $\text{range}( N ( A
  N )^\dagger ) = \text{range} (A^\dagger) = \text{range}( V )$ and hence
  $\text{range}( V ) \subseteq \text{range}( N )$. Then we can decompose $N$ as
  $ ( V \ V_c ) \hbox{\scriptsize $\begin{pmatrix} Z \\ Z_c
  \end{pmatrix}$}= V Z + V_c Z_c$, where $V_c$ is orthonormal, $V\T V_c = 0$, and
  {\scriptsize $\smash[tb]{
  \begin{pmatrix}
    Z \\ Z_c 
  \end{pmatrix}}$} has full row rank. Then,
  \begin{eqnarray*}
    0 &=& N ( A N )^\dagger - A^\dagger = ( V Z + V_c Z_c ) ( U \Sigma V\T ( V Z + V_c Z_c ) )^\dagger - V \Sigma^{-1} U\T \\
    &=& ( V Z + V_c Z_c ) ( U \Sigma Z )^\dagger - V \Sigma^{-1} U\T \\
    &=& ( V Z + V_c Z_c ) Z^\dagger \Sigma^{-1} U\T - V \Sigma^{-1} U\T = V_c Z_c Z^\dagger \Sigma^{-1} U\T.
  \end{eqnarray*}
  Multiplying by $V_c\T$ on the left and $U \Sigma$ on the right, we get $Z_c
  Z^\dagger = 0$, which is equivalent to $Z_c Z\T = 0$. Therefore,
  \begin{eqnarray*}
    \text{range}( N N\T A\T ) &=& \text{range}( ( V Z + V_c Z_c ) ( V Z + V_c Z_c )\T V \Sigma U\T ) \\
    &=& \text{range}( ( V Z Z\T V\T + V_c Z_c Z_c\T V_c\T ) V \Sigma U\T ) \\
    &=& \text{range} ( V Z Z\T \Sigma U\T ) \\
    &=& \text{range} ( V ) = \text{range} ( A\T ),
  \end{eqnarray*}
  where we used the facts that $Z$ has full row rank and hence $Z Z\T$ is nonsingular,
  $\Sigma$ is nonsingular, and $U$ has full column rank.

  To prove the second statement, let us take $B = A\T$. By the first statement,
  we know $B^\dagger = M ( B M )^\dagger$ if and only if $\text{range}( M M\T
  B\T ) = \text{range}(B\T)$, which is equivalent to saying $A^\dagger = ( M\T A
  )^\dagger M\T$ if and only if $\text{range}( M M\T A ) = \text{range} ( A )$.
\end{proof}

Although Lemma \ref{lemma:ls_precond} gives the necessary and sufficient
condition, it does not serve as a practical guide for preconditioning LS
problems.  In this work, we are more interested in a sufficient condition that
can help us build preconditioners.  To that end, we provide the following
theorem.

\begin{theorem}
  \label{thm:ls_precond_sufficient}
  Given $A \in \mathbb{R}^{m \times n}$, $b \in \mathbb{R}^m$, $N \in
  \mathbb{R}^{n \times p}$, and $M \in \mathbb{R}^{m \times q}$, let $x^*$ be
  the min-length solution to the LS problem $\min_x \| A x - b \|_2$,
  $x_{\text{right}}^* = N y^*$ where $y^*$ is the min-length solution to $\min_y
  \| A N y - b \|_2$, and $x_{\text{left}}^*$ be the min-length solution to
  $\min_x \| M\T A x - M\T b \|_2$.  Then,
  \begin{enumerate}
  \item $x_{\text{right}}^* = x^*$ if $\text{range}(N) = \text{range}(A\T)$,
  \item $x_{\text{left}}^* = x^*$ if $\text{range}(M) = \text{range}(A)$.
  \end{enumerate} 
\end{theorem}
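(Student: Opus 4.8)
The plan is to reduce both statements to the necessary-and-sufficient range conditions already established in Lemma~\ref{lemma:ls_precond}. As noted just before the theorem, it suffices to verify for statement~1 that $\text{range}(N) = \text{range}(A\T)$ forces $\text{range}(N N\T A\T) = \text{range}(A\T)$; by part~1 of the lemma this gives $A^\dagger = N(AN)^\dagger$, and hence $x_{\text{right}}^* = N(AN)^\dagger b = A^\dagger b = x^*$. Symmetrically, for statement~2 I would verify that $\text{range}(M) = \text{range}(A)$ forces $\text{range}(M M\T A) = \text{range}(A)$, invoke part~2 of the lemma, and conclude $x_{\text{left}}^* = (M\T A)^\dagger M\T b = x^*$.

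First I would dispatch the easy inclusion in statement~1. Since every column of $N N\T A\T$ lies in $\text{range}(N)$, we immediately obtain $\text{range}(N N\T A\T) \subseteq \text{range}(N) = \text{range}(A\T)$. The whole content of the statement is therefore the reverse inclusion, equivalently the claim that multiplying by $N N\T$ on the left does not lower the rank of $A\T$.

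To establish this, I would reuse the economy-sized SVD $A = U \Sigma V\T$ from the lemma, so that $\text{range}(A\T) = \text{range}(V)$ and $A\T = V \Sigma U\T$. The hypothesis $\text{range}(N) = \text{range}(V)$ lets me write $N = V Z$ for some $Z \in \mathbb{R}^{r \times p}$, and because $\dim \text{range}(N) = r$ the factor $Z$ must have full row rank, so $Z Z\T$ is nonsingular. A direct computation then gives
\begin{equation*}
  N N\T A\T = V Z Z\T V\T \cdot V \Sigma U\T = V \, ( Z Z\T \Sigma U\T ),
\end{equation*}
using $V\T V = I$. Since $Z Z\T$ and $\Sigma$ are nonsingular and $U$ has full column rank, the right factor $Z Z\T \Sigma U\T$ has full row rank $r$, whence $\text{range}(N N\T A\T) = \text{range}(V) = \text{range}(A\T)$, as required.

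For statement~2 I would avoid repeating the argument, instead applying part~1 of the lemma to the matrix $B = A\T$ with $M$ in the role of $N$, exactly as the lemma's second part was deduced from its first: $\text{range}(M) = \text{range}(A) = \text{range}(B\T)$ yields $B^\dagger = M(B M)^\dagger$, which rewrites as $A^\dagger = (M\T A)^\dagger M\T$ and hence $x_{\text{left}}^* = x^*$. The only genuine obstacle is the reverse inclusion in statement~1; the factorization $N = V Z$ with $Z$ of full row rank is what makes it transparent, and everything else is bookkeeping that follows from nonsingularity of $Z Z\T$ and $\Sigma$.
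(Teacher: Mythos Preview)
Your proof is correct and follows essentially the same route as the paper: write $N = V Z$ with $Z$ of full row rank via the economy SVD, compute $\text{range}(N N\T A\T) = \text{range}(V Z Z\T \Sigma U\T) = \text{range}(V)$, and invoke Lemma~\ref{lemma:ls_precond}. Your treatment of statement~2 via the transpose trick $B = A\T$ is exactly how the paper derives part~2 of the lemma from part~1, whereas the paper's proof of the theorem simply says ``by similar arguments''; this is a cosmetic difference only.
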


\begin{proof} 
  Let $r = \rank(A)$ and $U \Sigma V\T$ be $A$'s economy-sized SVD. If
  $\text{range}(N) = \text{range}(A\T) = \text{range}(V)$, we can write $N$ as
  $V Z$, where $Z$ has full row rank. Therefore,
  \begin{eqnarray*}
    \text{range}( N N\T A\T ) &=& \text{range}( V Z Z\T V\T V \Sigma U\T ) = \text{range}( V Z Z\T \Sigma U\T ) \\
    &=& \text{range}( V ) = \text{range}(A\T).
  \end{eqnarray*}
  By Lemma \ref{lemma:ls_precond}, $A^\dagger = N ( A N )^\dagger$ and hence
  $x_{\text{left}}^* = x^*$. The second statement can be proved by similar
  arguments.
\end{proof}

\section{Algorithm \texttt{LSRN}}
\label{sec:prec-via-rand}

In this section we present \texttt{LSRN}, an iterative solver for
solving strongly over- or under-determined systems, based on ``random
normal projection''.  To construct a preconditioner we apply a
transformation matrix whose entries are independent random variables
drawn from the standard normal distribution.  We prove that the
preconditioned system is almost surely consistent with the original
system, i.e., both have the same min-length solution.  At least as
importantly, we prove that the spectrum of the preconditioned system
is independent of the spectrum of the original system; and we provide
a strong concentration result on the extreme singular values of the
preconditioned system.  This concentration result enables us to
predict the number of iterations for CG-like methods, and it also
enables use of the CS method, which requires an accurate bound on the
singular values to work efficiently.

\subsection{The algorithm}
\label{subsec:the_alg}

Algorithm \ref{alg:ls_randn_tall} shows the detailed procedure of \texttt{LSRN}
to compute the min-length solution to a strongly over-determined problem, and
Algorithm \ref{alg:ls_randn_fat} shows the detailed procedure for a strongly
under-determined problem.  We refer to these two algorithms together as
\texttt{LSRN}.  Note that they only use the input matrix $A$ for matrix-vector
and matrix-matrix multiplications, and thus $A$ can be a dense matrix, a sparse
matrix, or a linear operator.  In the remainder of this section we focus on
analysis of the over-determined case. We emphasize that analysis of the
under-determined case is quite analogous.

\begin{algorithm}
  \caption{\texttt{LSRN} (computes $\hat{x} \approx A^\dagger b$ when $m \gg n$)}
  \label{alg:ls_randn_tall}
  \begin{algorithmic}[1]
    \STATE Choose an oversampling factor $\gamma > 1$ and set $s = \lceil \gamma
    n \rceil$.

    \STATE Generate $G = \text{randn}(s,m)$, i.e., an $s$-by-$m$ random matrix
    whose entries are independent random variables following the standard normal
    distribution.

    \STATE Compute $\tilde{A} = G A$.

    \STATE Compute $\tilde{A}$'s economy-sized SVD $\tilde{U} \tilde{\Sigma}
    \tilde{V}\T$, where $r = \rank(\tilde{A})$, $\tilde{U} \in \mathbb{R}^{s
      \times r}$, $\tilde{\Sigma} \in \mathbb{R}^{r \times r}$, $\tilde{V} \in
    \mathbb{R}^{n \times r}$, and only $\tilde{\Sigma}$ and $\tilde{V}$ are
    needed.
    
    \STATE Let $N = \tilde{V} \tilde{\Sigma}^{-1}$.
  
    \STATE Compute the min-length solution to $\min_y \| A N y - b \|_2$ using
    an iterative method. Denote the solution by $\hat{y}$.

    \STATE Return $\hat{x} = N \hat{y}$.
  \end{algorithmic}
\end{algorithm}

\begin{algorithm}
  \caption{\texttt{LSRN} (computes $\hat{x} \approx A^\dagger b$ when $m \ll n$)}
  \label{alg:ls_randn_fat}
  \begin{algorithmic}[1]
    \STATE Choose an oversampling $\gamma > 1$ and set $s = \lceil \gamma m
    \rceil$.

    \STATE Generate $G = \text{randn}(n,s)$, i.e., an $n$-by-$s$ random matrix
    whose entries are independent random variables following the standard normal
    distribution.

    \STATE Compute $\tilde{A} = A G$.

    \STATE Compute $\tilde{A}$'s economy-sized SVD $\tilde{U} \tilde{\Sigma}
    \tilde{V}\T$, where $r = \rank(\tilde{A})$, $\tilde{U} \in \mathbb{R}^{n
      \times r}$, $\tilde{\Sigma} \in \mathbb{R}^{r \times r}$, $\tilde{V} \in
    \mathbb{R}^{s \times r}$, and only $\tilde{U}$ and $\tilde{\Sigma}$ are
    needed.
    
    \STATE Let $M = \tilde{U} \tilde{\Sigma}^{-1}$.
  
    \STATE Compute the min-length solution to $\min_x \| M\T A x - M\T b \|_2$
    using an iterative method, denoted by $\hat{x}$.

    \STATE Return $\hat{x}$.
  \end{algorithmic}
\end{algorithm}

\subsection{Theoretical properties}
\label{subsec:theoretical_properties}

The use of random normal projection offers \texttt{LSRN} some nice theoretical
properties. We start with consistency.

\begin{theorem}
  \label{thm:consistency}
  In Algorithm \ref{alg:ls_randn_tall}, we have $\hat{x} = A^\dagger b$ almost
  surely.
\end{theorem}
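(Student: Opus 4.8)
The plan is to recognize $\hat{x}$ as exactly the right-preconditioned min-length solution $x^*_{\text{right}}$ appearing in Theorem~\ref{thm:ls_precond_sufficient}: in Algorithm~\ref{alg:ls_randn_tall}, $\hat{y}$ is the min-length solution to $\min_y \|ANy-b\|_2$ and $\hat{x}=N\hat{y}$. Theorem~\ref{thm:ls_precond_sufficient}(1) then guarantees $\hat{x}=x^*=A^\dagger b$ provided the sufficient condition $\text{range}(N)=\text{range}(A\T)$ holds. So the entire content of the theorem reduces to verifying this range identity with probability one, and the only randomness that matters is whether $N$ has the correct column space.

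First I would reduce the range condition to a statement purely about $\tilde{A}$. Since $\tilde{\Sigma}^{-1}$ is nonsingular, $\text{range}(N)=\text{range}(\tilde{V}\tilde{\Sigma}^{-1})=\text{range}(\tilde{V})=\text{range}(\tilde{A}\T)$, the last equality holding because $\tilde{V}$ spans the row space of $\tilde{A}=\tilde{U}\tilde{\Sigma}\tilde{V}\T$. Writing the economy-sized SVD $A=U\Sigma V\T$ with $r=\rank(A)$, we have $\text{range}(A\T)=\text{range}(V)$ and $\text{range}(\tilde{A}\T)=\text{range}(A\T G\T)\subseteq \text{range}(A\T)=\text{range}(V)$. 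Hence the reverse inclusion is the only thing at issue, and it is equivalent to $\rank(\tilde{A})=r$ almost surely.

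Next I would reduce this rank claim to a standard Gaussian fact. Substituting the SVD gives $\tilde{A}=GA=(GU)\Sigma V\T$; because $\Sigma\in\mathbb{R}^{r\times r}$ is invertible and $V\T$ has full row rank, $\rank(\tilde{A})=\rank(GU)$, where $GU\in\mathbb{R}^{s\times r}$. By the orthogonal invariance of the standard normal distribution, the orthonormality of the columns of $U$ implies that $GU$ again has i.i.d. standard normal entries, so $GU$ is itself a Gaussian matrix. Since $\gamma>1$ forces $s=\lceil\gamma n\rceil\ge n\ge r$, this Gaussian matrix has at least as many rows as columns.

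The remaining and only genuinely probabilistic step is to argue that such a Gaussian matrix has full column rank almost surely, which I expect to be the crux but not a serious obstacle. The set of $s\times r$ matrices (with $s\ge r$) failing to have full column rank is precisely the zero set of $\det\!\big((GU)\T GU\big)$, a polynomial in the entries of $GU$ that is not identically zero; this zero set is a proper algebraic subvariety and therefore has Lebesgue measure zero, while $GU$ has an absolutely continuous Gaussian density. Thus $\rank(GU)=r$ with probability one, giving $\rank(\tilde{A})=r$ and hence $\text{range}(N)=\text{range}(V)=\text{range}(A\T)$ almost surely. Invoking Theorem~\ref{thm:ls_precond_sufficient}(1) then yields $\hat{x}=A^\dagger b$ almost surely, as claimed.
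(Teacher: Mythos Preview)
Your proposal is correct and follows essentially the same route as the paper's proof: both reduce the claim to verifying $\text{range}(N)=\text{range}(A\T)$ via the chain $\text{range}(N)=\text{range}(\tilde{V})=\text{range}(\tilde{A}\T)=\text{range}(V\Sigma(GU)\T)$, then use orthogonal invariance to conclude that $GU\in\mathbb{R}^{s\times r}$ is standard Gaussian and hence almost surely of full column rank, after which Theorem~\ref{thm:ls_precond_sufficient}(1) finishes the argument. The paper simply asserts that a Gaussian matrix with $s\ge r$ has rank $r$ with probability~$1$, whereas you supply the standard measure-zero justification via $\det\!\big((GU)\T GU\big)$; this is extra detail rather than a different idea.
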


\begin{proof}
  Let $r = \rank(A)$ and $U \Sigma V\T$ be $A$'s economy-sized SVD. We have
  \begin{eqnarray*}
    \text{range}(N) &=& \text{range}(\tilde{V} \tilde{\Sigma}^{-1}) = \text{range}(\tilde{V}) \\
    &=& \text{range}(\tilde{A}\T) = \text{range}( A\T G\T ) = \text{range}( V \Sigma (GU)\T ).
  \end{eqnarray*}
  Define $G_1 = GU \in \mathbb{R}^{s \times r}$. Since $G$'s entries are
  independent random variables following the standard normal distribution and
  $U$ is orthonormal, $G_1$'s entries are also independent random variables
  following the standard normal distribution. Then given $s \geq \gamma n > n
  \geq r$, we know $G_1$ has full column rank $r$ with probability
  $1$. Therefore,
  \begin{equation*}
    \text{range}(N) = \text{range}( V \Sigma G_1\T ) = \text{range}(V) = \text{range}(A\T),
  \end{equation*}
  and hence by Theorem \ref{thm:ls_precond_sufficient} we have $\hat{x} =
  A^\dagger b$ almost surely.
\end{proof}

A more interesting property of \texttt{LSRN} is that the spectrum (the set of
singular values) of the preconditioned system is solely associated with a random
matrix of size $s \times r$, independent of the spectrum of the original system.

\begin{lemma}
  \label{lemma:spectrum}
  In Algorithm \ref{alg:ls_randn_tall}, the spectrum of $A N$ is the same as the
  spectrum of $G_1^\dagger = (G U)^\dagger$, independent of $A$'s spectrum.
\end{lemma}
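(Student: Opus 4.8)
The plan is to reduce everything to the economy-sized SVD $A = U \Sigma V\T$ and to track how the preconditioner $N = \tilde{V} \tilde{\Sigma}^{-1}$ interacts with it, so that $A N$ collapses to a small nonsingular $r \times r$ matrix whose singular values can be read off directly. First I would substitute the SVD into $\tilde{A} = G A$ to obtain $\tilde{A} = G_1 \Sigma V\T$, where $G_1 = G U$ as in the proof of Theorem~\ref{thm:consistency}. Almost surely $G_1$ has full column rank $r$, so $\tilde{A}$ has rank exactly $r$ and its economy-sized SVD $\tilde{U} \tilde{\Sigma} \tilde{V}\T$ is well defined, with $\tilde{\Sigma}$ nonsingular.

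The central observation, already used implicitly in Theorem~\ref{thm:consistency}, is that $\text{range}(\tilde{V}) = \text{range}(A\T) = \text{range}(V)$. Since $\tilde{V}$ and $V$ are both $n \times r$ with orthonormal columns spanning the same subspace, I can write $\tilde{V} = V W$ for a unique $W \in \mathbb{R}^{r \times r}$, and the orthonormality of both $\tilde{V}$ and $V$ forces $W$ to be orthogonal. This is the one step that requires a little care, since it is precisely what lets me eliminate the unknown right singular vectors; everything afterward is bookkeeping.

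Next I would compute $A N = U \Sigma V\T \tilde{V} \tilde{\Sigma}^{-1} = U \Sigma W \tilde{\Sigma}^{-1}$, using $V\T \tilde{V} = V\T V W = W$. Because $U$ has orthonormal columns, left-multiplication by $U$ preserves singular values, so the singular values of $A N$ coincide with those of the nonsingular $r \times r$ matrix $\Sigma W \tilde{\Sigma}^{-1}$. In parallel, equating the two expressions $\tilde{A} = G_1 \Sigma V\T$ and $\tilde{A} = \tilde{U} \tilde{\Sigma} \tilde{V}\T = \tilde{U} \tilde{\Sigma} W\T V\T$ and cancelling the full-row-rank factor $V\T$ on the right gives $G_1 \Sigma = \tilde{U} \tilde{\Sigma} W\T$, i.e., $G_1 = \tilde{U} \tilde{\Sigma} W\T \Sigma^{-1}$. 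Stripping the orthonormal factor $\tilde{U}$ again, the singular values of $G_1$ are those of $\tilde{\Sigma} W\T \Sigma^{-1}$, which (using $W\T = W^{-1}$) is exactly the inverse of $\Sigma W \tilde{\Sigma}^{-1}$.

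Finally I would invoke that, since $G_1$ has full column rank, the nonzero singular values of $G_1^\dagger$ are the reciprocals of the singular values of $G_1$, and that the reciprocals of the singular values of an invertible matrix are the singular values of its inverse. Hence the singular values of $G_1^\dagger$ equal those of $(\tilde{\Sigma} W\T \Sigma^{-1})^{-1} = \Sigma W \tilde{\Sigma}^{-1}$, which I already identified with the singular values of $A N$. Independence from $A$'s spectrum is then immediate: by the rotational invariance of the standard normal distribution, $G_1 = G U$ is itself a matrix of independent standard normal entries regardless of $\Sigma$ and $V$, so the spectrum of $A N$ depends only on the random $s \times r$ matrix $G_1$ and not on $\Sigma$. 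I expect the only genuinely delicate point to be justifying the orthogonal factor $W$ and confirming that each reduction to a smaller matrix (dropping $U$, dropping $\tilde{U}$, and passing to the inverse) preserves the full multiset of singular values.
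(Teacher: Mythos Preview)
Your argument is correct. It takes a somewhat different route from the paper's proof, so a brief comparison is worthwhile.

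The paper introduces the economy-sized SVD $G_1 = U_1 \Sigma_1 V_1\T$ and exploits the \emph{left}-side relation $\text{range}(\tilde{U}) = \text{range}(U_1)$, writing $U_1 = \tilde{U} Q_1$ for an orthogonal $Q_1$. From this it derives $N = V \Sigma^{-1} V_1 \Sigma_1^{-1} Q_1\T$ and hence $A N = U V_1 \Sigma_1^{-1} Q_1\T$, which is itself an SVD of $AN$; the singular values $\text{diag}(\Sigma_1^{-1})$ can then be read off at a glance. You instead exploit the \emph{right}-side relation $\text{range}(\tilde{V}) = \text{range}(V)$, writing $\tilde{V} = V W$, and reduce both $AN$ and $G_1$ to the $r \times r$ matrices $\Sigma W \tilde{\Sigma}^{-1}$ and its inverse, closing the loop via the reciprocal-singular-value property of inverses and pseudoinverses. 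What the paper's approach buys is an explicit SVD of $AN$ with no inversion step; what your approach buys is that you never need to name or compute the SVD of $G_1$. Both are clean and of comparable length; the paper's final identification is slightly more direct, while yours keeps the bookkeeping entirely on the $\tilde{A}$ side.
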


\begin{proof}
  Following the proof of Theorem \ref{thm:consistency}, let $G_1 = U_1 \Sigma_1
  V_1\T$ be $G_1$'s economy-sized SVD, where $U_1 \in \mathbb{R}^{s \times r}$,
  $\Sigma_1 \in \mathbb{R}^{r \times r}$, and $V_1 \in \mathbb{R}^{r \times
    r}$. Since $\text{range}(\tilde{U}) = \text{range}(GA) = \text{range}(GU) =
  \text{range}(U_1)$ and both $\tilde{U}$ and $U_1$ are orthonormal matrices,
  there exists an orthonormal matrix $Q_1 \in \mathbb{R}^{r \times r}$ such that
  $U_1 = \tilde{U} Q_1$.  As a result,
  \begin{equation*}
    \tilde{U} \tilde{\Sigma} \tilde{V}\T = \tilde{A} = G U \Sigma V\T = U_1 \Sigma_1 V_1\T \Sigma V\T = \tilde{U} Q_1 \Sigma_1 V_1\T \Sigma V\T. 
  \end{equation*}
  Multiplying by $\tilde{U}\T$ on the left of each side, we get $\tilde{\Sigma}
  \tilde{V}\T = Q_1 \Sigma_1 V_1\T \Sigma V\T$. Taking the pseudoinverse gives
  $N = \tilde{V} \tilde{\Sigma}^{-1} = V \Sigma^{-1} V_1 \Sigma_{1}^{-1} Q_1\T$.
  Thus,
  \begin{equation*}
    A N= U \Sigma V\T  V \Sigma^{-1} V_1 \Sigma_{1}^{-1} Q_1\T = U V_1 \Sigma_{1}^{-1} Q_1\T\,,
  \end{equation*}
  which gives $AN$'s SVD. Therefore, $A N$'s singular values are
  $\text{diag}(\Sigma_1^{-1})$, the same as $G_1^\dagger$'s spectrum, but
  independent of $A$'s.
\end{proof}

We know that $G_1 = G U$ is a random matrix whose entries are independent random
variables following the standard normal distribution.  The spectrum of $G_1$ is
a well-studied problem in Random Matrix Theory, and in particular the properties
of extreme singular values have been studied.  Thus, the following lemma is
important for us. We use $\mathcal{P}(\cdot)$ to refer to the probability that a
given event occurs.

\begin{lemma} 
  \label{lemma:concentration}\textnormal{(Davidson and Szarek \cite{davidson2001local})} Consider an $s \times r$ random matrix
  $G_1$ with $s \geq r$, whose entries are independent random variables following the
  standard normal distribution. Let the singular values be $\sigma_1 \geq \cdots \geq
  \sigma_r$. Then for any $t > 0$,
  \begin{equation}
    \label{eq:concentration}
    \max \left\{ \mathcal{P}(\sigma_1 \geq \sqrt{s} + \sqrt{r} + t), \mathcal{P}( \sigma_r \leq \sqrt{s} - \sqrt{r} - t) \right\} < e^{- t^2 / 2}.
  \end{equation}
\end{lemma}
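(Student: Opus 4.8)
Since this is the classical extreme-singular-value estimate for Gaussian matrices, I would reconstruct it along the standard two-step route: first bound the expected values $\mathbb{E}[\sigma_1]$ and $\mathbb{E}[\sigma_r]$, and then use Gaussian concentration of measure to control the fluctuations of $\sigma_1$ and $\sigma_r$ around those means. The starting point is the variational description of the extreme singular values, namely $\sigma_1 = \max_{\|u\|=1,\,\|v\|=1} u\T G_1 v$ and $\sigma_r = \min_{\|v\|=1} \max_{\|u\|=1} u\T G_1 v$, where $u \in \mathbb{R}^s$ and $v \in \mathbb{R}^r$; the min-max form for $\sigma_r$ is valid precisely because $s \geq r$, so that $G_1$ has full column rank almost surely and $\sigma_r$ is the relevant lower extreme.

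For the concentration step, I would use the fact that $G_1 \mapsto \sigma_1(G_1)$ and $G_1 \mapsto \sigma_r(G_1)$ are each $1$-Lipschitz as functions of the $sr$ entries of $G_1$, viewed as a point in $\mathbb{R}^{sr}$ with the Euclidean (Frobenius) metric; this is immediate from the variational formulas, or from Weyl's perturbation inequality for singular values. The entries of $G_1$ form a standard Gaussian vector, so the Gaussian concentration inequality for $1$-Lipschitz functions gives $\mathcal{P}(\sigma_1 \geq \mathbb{E}[\sigma_1] + t) \leq e^{-t^2/2}$ and $\mathcal{P}(\sigma_r \leq \mathbb{E}[\sigma_r] - t) \leq e^{-t^2/2}$. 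Combining these with the expectation bounds $\mathbb{E}[\sigma_1] \leq \sqrt{s} + \sqrt{r}$ and $\mathbb{E}[\sigma_r] \geq \sqrt{s} - \sqrt{r}$ then yields \eqref{eq:concentration} after replacing the mean by the deterministic bound; the strict inequality follows because this replacement only enlarges the effective threshold (and the Gaussian bound is in any case strict for $t > 0$).

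The main obstacle --- and the only genuinely nontrivial ingredient --- is establishing the sharp expectation bounds, which I would obtain from the Slepian--Gordon Gaussian comparison inequalities. The plan is to compare the process $X_{u,v} = u\T G_1 v$ against the decoupled process $Y_{u,v} = g\T u + h\T v$, where $g \in \mathbb{R}^s$ and $h \in \mathbb{R}^r$ are independent standard Gaussian vectors, after the usual variance-matching adjustment. Checking the covariance relations $\mathbb{E}[X_{u,v} X_{u',v'}] = (u\T u')(v\T v')$ against $\mathbb{E}[Y_{u,v} Y_{u',v'}] = u\T u' + v\T v'$ verifies the hypotheses of the comparison theorems. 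Slepian's inequality applied to the double maximum gives $\mathbb{E}[\sigma_1] \leq \mathbb{E}\|g\| + \mathbb{E}\|h\| \leq \sqrt{s} + \sqrt{r}$, where the last step is Jensen's inequality. For the lower extreme, the min-max direction of Gordon's inequality gives $\mathbb{E}[\sigma_r] \geq \mathbb{E}\|g\| - \mathbb{E}\|h\|$; here Jensen only yields $\mathbb{E}\|g\| \leq \sqrt{s}$, the wrong direction, so I expect the delicate point to be closing this gap. This is done by observing that $k \mapsto \sqrt{k} - \mathbb{E}\|g_k\|$ is positive and decreasing in the dimension $k$, whence $s \geq r$ forces $\mathbb{E}\|g\| - \mathbb{E}\|h\| \geq \sqrt{s} - \sqrt{r}$. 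All the real content lives in the Gaussian comparison and in this monotonicity observation; the remaining manipulations are routine.
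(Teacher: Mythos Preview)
The paper does not prove this lemma; it is quoted verbatim as a known result of Davidson and Szarek and used as a black box. Your reconstruction is the standard Davidson--Szarek argument (Gaussian concentration for $1$-Lipschitz functionals plus Slepian--Gordon comparison for the expectations, with the monotonicity of $k\mapsto\sqrt{k}-\mathbb{E}\|g_k\|$ to handle the lower tail), and it is correct as sketched, so there is nothing to compare against on the paper's side.
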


With the aid of Lemma \ref{lemma:concentration}, it is straightforward to 
obtain the concentration result of $\sigma_1(AN)$, $\sigma_r(AN)$, and 
$\kappa(AN)$ as follows.

\begin{theorem}
  \label{thm:cond_bound}
  In Algorithm \ref{alg:ls_randn_tall}, for any $\alpha \in ( 0, 1 - \sqrt{r/s} )$, we
  have
  \begin{equation}
    \label{eq:sgm_concentration}
    \mbox{$\max \left\{ \mathcal{P} \left(\sigma_1(AN) \geq \frac{1}{ (1-\alpha) \sqrt{s} - \sqrt{r}} \right), \mathcal{P} \left( \sigma_r(AN) \leq \frac{1}{ ( 1 + \alpha ) \sqrt{s} + \sqrt{r}} \right) \right\} < e^{-\alpha^2 s / 2}$}
  \end{equation}
and 
\begin{equation}
    \label{eq:cond_concentration}
    \mathcal{P} \left( \kappa(A N) = \frac{\sigma_1(AN)}{\sigma_r(AN)} \leq \frac{1 + \alpha + \sqrt{r/s}}{1 - \alpha - \sqrt{r/s}} \right) \geq 1 - 2 e^{-\alpha^2 s/2}.
  \end{equation}
\end{theorem}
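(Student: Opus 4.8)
The plan is to combine Lemma~\ref{lemma:spectrum}, which identifies the spectrum of $AN$ with that of $G_1^\dagger$, with the Davidson--Szarek tail bound of Lemma~\ref{lemma:concentration} applied to $G_1$ itself. First I would record the elementary fact that if $G_1$ has singular values $\sigma_1(G_1) \geq \cdots \geq \sigma_r(G_1) > 0$ (positivity holding almost surely, since $G_1$ has full column rank $r$ with probability $1$, as established in the proof of Theorem~\ref{thm:consistency}), then its pseudoinverse $G_1^\dagger$ has singular values $1/\sigma_r(G_1) \geq \cdots \geq 1/\sigma_1(G_1)$. By Lemma~\ref{lemma:spectrum} these are exactly the singular values of $AN$, so the extreme singular values of the preconditioned matrix satisfy $\sigma_1(AN) = 1/\sigma_r(G_1)$ and $\sigma_r(AN) = 1/\sigma_1(G_1)$.

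Next I would apply Lemma~\ref{lemma:concentration} to $G_1$ with the choice $t = \alpha\sqrt{s}$. Since $\alpha \in (0, 1 - \sqrt{r/s})$, we have $t > 0$, so the lemma applies, and moreover $(1-\alpha)\sqrt{s} - \sqrt{r} = \sqrt{s} - \sqrt{r} - t > 0$, which keeps the reciprocals below well-defined and positive. Substituting, Lemma~\ref{lemma:concentration} gives $\mathcal{P}(\sigma_1(G_1) \geq (1+\alpha)\sqrt{s} + \sqrt{r}) < e^{-\alpha^2 s/2}$ and $\mathcal{P}(\sigma_r(G_1) \leq (1-\alpha)\sqrt{s} - \sqrt{r}) < e^{-\alpha^2 s/2}$. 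Rewriting each event in terms of $AN$ via the reciprocal identities of the previous paragraph --- the event $\{\sigma_1(AN) \geq 1/((1-\alpha)\sqrt{s} - \sqrt{r})\}$ is identical to $\{\sigma_r(G_1) \leq (1-\alpha)\sqrt{s} - \sqrt{r}\}$, and $\{\sigma_r(AN) \leq 1/((1+\alpha)\sqrt{s} + \sqrt{r})\}$ is identical to $\{\sigma_1(G_1) \geq (1+\alpha)\sqrt{s} + \sqrt{r}\}$ --- immediately yields the two tail bounds of \eqref{eq:sgm_concentration}.

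Finally, for the condition-number bound \eqref{eq:cond_concentration} I would take complements and apply the union bound. Let $E_1$ and $E_2$ denote the two ``bad'' events in \eqref{eq:sgm_concentration}. On $(E_1 \cup E_2)^c$ we have simultaneously $\sigma_1(AN) < 1/((1-\alpha)\sqrt{s} - \sqrt{r})$ and $\sigma_r(AN) > 1/((1+\alpha)\sqrt{s} + \sqrt{r})$, so
\[
  \kappa(AN) = \frac{\sigma_1(AN)}{\sigma_r(AN)} < \frac{(1+\alpha)\sqrt{s} + \sqrt{r}}{(1-\alpha)\sqrt{s} - \sqrt{r}} = \frac{1 + \alpha + \sqrt{r/s}}{1 - \alpha - \sqrt{r/s}},
\]
the last equality obtained by dividing numerator and denominator by $\sqrt{s}$. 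Since $\mathcal{P}(E_1 \cup E_2) \leq \mathcal{P}(E_1) + \mathcal{P}(E_2) < 2 e^{-\alpha^2 s/2}$, the complementary event has probability at least $1 - 2e^{-\alpha^2 s/2}$, which gives \eqref{eq:cond_concentration}.

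I do not expect a genuine obstacle here, as the result is essentially a change of variables feeding into an off-the-shelf random-matrix tail bound. The only points requiring care are getting the reciprocal pairing of singular values right when passing through the pseudoinverse, and verifying that the hypothesis $\alpha < 1 - \sqrt{r/s}$ is exactly what is needed to keep $(1-\alpha)\sqrt{s} - \sqrt{r}$ positive, so that both the smallest-singular-value bound and the condition-number bound are meaningful.
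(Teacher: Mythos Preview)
Your proposal is correct and follows exactly the paper's own route: the paper's proof is the one-liner ``Set $t = \alpha\sqrt{s}$ in Lemma~\ref{lemma:concentration},'' relying implicitly on Lemma~\ref{lemma:spectrum} for the reciprocal identification of the singular values and on a union bound for \eqref{eq:cond_concentration}. You have simply unpacked each of those implicit steps in full detail.
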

\begin{proof}
  Set $t = \alpha \sqrt{s}$ in Lemma \ref{lemma:concentration}.
\end{proof}

In order to estimate the number of iterations for CG-like methods, we can now
combine \eqref{eq:cg_convergence_rate} and \eqref{eq:cond_concentration}.

\begin{theorem}
  \label{thm:iter}
  In exact arithmetic, given a tolerance $\varepsilon > 0$, a CG-like method
  applied to the preconditioned system $\min_y \| A N y - b \|_2$ with $y^{(0)}
  = 0$ converges within $(\log \varepsilon - \log 2) / \log ( \alpha +
  \sqrt{r/s} )$ iterations in the sense that
  \begin{equation}
    \label{eq:lsqr_y_converge}
    \|\hat{y}_{\CG} - y^* \|_{(AN)\T(AN)} \leq \varepsilon \| y^* \|_{(AN)\T(AN)}    
  \end{equation}
  holds with probability at least $1 - 2 e^{-\alpha^2 s/2}$ for any $\alpha \in
  ( 0, 1 - \sqrt{s/r} )$, where $\hat{y}_\CG$ is the approximate solution
  returned by the CG-like solver and $y^* = (AN)^\dagger b$. Let $\hat{x}_{\CG}
  = N \hat{y}_{\CG}$ be the approximate solution to the original problem. Since
  $x^* = N y^*$, \eqref{eq:lsqr_y_converge} is equivalent to
  \begin{equation}
    \label{eq:lsqr_x_converge}
    \|\hat{x}_{\CG} - x^* \|_{A\T A} \leq \varepsilon \| x^* \|_{A\T A},
  \end{equation}
  or in terms of residuals,
  \begin{equation}
    \label{eq:lsqr_r_converge}
    \|\hat{r}_{\CG} - r^*\|_2 \leq \varepsilon \| b - r^* \|_2,
  \end{equation}
  where $\hat{r}_{\CG} = b - A \hat{x}_{\CG}$ and $r^* = b - A x^*$.
\end{theorem}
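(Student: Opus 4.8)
The plan is to treat the statement as three separate claims: the iteration count together with \eqref{eq:lsqr_y_converge}, which is the only part with genuine content, and then the two norm identities that convert \eqref{eq:lsqr_y_converge} into \eqref{eq:lsqr_x_converge} and \eqref{eq:lsqr_r_converge}. The iteration bound follows by feeding the conditioning estimate of Theorem~\ref{thm:cond_bound} into the classical CG rate \eqref{eq:cg_convergence_rate} applied to the preconditioned matrix $AN$, while the equivalences are pure linear algebra driven by the relations $\hat{x}_{\CG} = N \hat{y}_{\CG}$ and $x^* = N y^*$.

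First I would establish the iteration count. Since Lemma~\ref{lemma:spectrum} shows that $AN$ has full column rank $r$ almost surely, both $\kappa(AN)$ and the min-length solution $y^* = (AN)^\dagger b$ are well defined, and a CG-like method started at $y^{(0)} = 0$ converges to $y^*$. Applying \eqref{eq:cg_convergence_rate} to $AN$ and using $\sqrt{\kappa((AN)\T(AN))} = \kappa(AN)$, the relative error after $k$ steps is at most $2 \left( \frac{\kappa(AN) - 1}{\kappa(AN) + 1} \right)^k$. Writing $\rho = \alpha + \sqrt{r/s}$, Theorem~\ref{thm:cond_bound} gives $\kappa(AN) \leq \frac{1 + \rho}{1 - \rho}$ with probability at least $1 - 2 e^{-\alpha^2 s / 2}$; since $t \mapsto (t-1)/(t+1)$ is increasing and equals exactly $\rho$ at $t = (1+\rho)/(1-\rho)$, the geometric factor is bounded above by $\rho$. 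With $y^{(0)} = 0$ the error ratio is therefore at most $2 \rho^k$, and requiring $2 \rho^k \leq \varepsilon$ yields the stated bound: because the admissible range of $\alpha$ (namely $\alpha < 1 - \sqrt{r/s}$) forces $\rho < 1$ and hence $\log \rho < 0$, solving for $k$ and flipping the inequality gives $k \geq (\log \varepsilon - \log 2)/\log \rho$. The failure probability is inherited verbatim from Theorem~\ref{thm:cond_bound}.

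Next I would prove the two equivalences. From $\hat{x}_{\CG} - x^* = N(\hat{y}_{\CG} - y^*)$ and the identity $N\T A\T A N = (AN)\T(AN)$, it follows that $\|\hat{x}_{\CG} - x^*\|_{A\T A} = \|\hat{y}_{\CG} - y^*\|_{(AN)\T(AN)}$, and likewise $\|x^*\|_{A\T A} = \|y^*\|_{(AN)\T(AN)}$, which makes \eqref{eq:lsqr_y_converge} and \eqref{eq:lsqr_x_converge} the same inequality. For the residual form I would use $\hat{r}_{\CG} - r^* = -A(\hat{x}_{\CG} - x^*)$, so that $\|\hat{r}_{\CG} - r^*\|_2 = \|A(\hat{x}_{\CG} - x^*)\|_2 = \|\hat{x}_{\CG} - x^*\|_{A\T A}$, together with $b - r^* = A x^*$, giving $\|b - r^*\|_2 = \|x^*\|_{A\T A}$; substituting both identities converts \eqref{eq:lsqr_x_converge} into \eqref{eq:lsqr_r_converge}.

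The proof has no deep obstacle, since all the probabilistic difficulty is already packaged into Theorem~\ref{thm:cond_bound}. The only points that demand care are the bookkeeping around the negative sign of $\log \rho$ when solving $2 \rho^k \leq \varepsilon$ for $k$, and confirming that the constraint on $\alpha$ guarantees $\rho = \alpha + \sqrt{r/s} < 1$, so that the geometric rate is genuinely contractive; this is also what keeps the iteration formula finite and positive. Everything else reduces to direct substitution.
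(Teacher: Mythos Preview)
Your proposal is correct and matches the paper's approach: the paper does not give a formal proof but simply states that the result follows by combining the CG convergence bound \eqref{eq:cg_convergence_rate} with the condition number concentration \eqref{eq:cond_concentration} from Theorem~\ref{thm:cond_bound}, which is precisely what you carry out. Your added derivations of the norm equivalences \eqref{eq:lsqr_x_converge} and \eqref{eq:lsqr_r_converge} are straightforward and correct, and you rightly note that the constraint on $\alpha$ should read $\alpha < 1 - \sqrt{r/s}$ (the paper's $\sqrt{s/r}$ is a typo) to ensure $\rho < 1$.
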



In addition to allowing us to bound the number of iterations for CG-like
methods, the result given by \eqref{eq:sgm_concentration} also allows us to use
the CS method.  This method needs only one level-1
and two level-2 BLAS operations per iteration; and, importantly, because it
doesn't have vector inner products that require synchronization between nodes,
this method is suitable for clusters with high communication cost. It does need
an explicit bound on the singular values, but once that bound is tight, the CS
method has the same theoretical upper bound on the convergence rate as other
CG-like methods.  Unfortunately, in many cases, it is hard to obtain such an
accurate bound, which prevents the CS method becoming popular in practice.  In
our case, however, \eqref{eq:sgm_concentration} provides a probabilistic bound
with very high confidence.  Hence, we can employ the CS method without
difficulty.  For completeness, Algorithm \ref{alg:ls_chebyshev} describes the CS
method we implemented for solving LS problems.  For discussion of its
variations, see Gutknecht and Rollin~\cite{gutknecht2002chebyshev}.
\begin{algorithm}[h]
  \caption{Chebyshev semi-iterative (CS) method (computes $x \approx A^\dagger
    b$)}
  \label{alg:ls_chebyshev}
  \begin{algorithmic}[1]
    \STATE Given $A \in \mathbb{R}^{m \times n}$, $b \in \mathbb{R}^{m}$, and a
    tolerance $\varepsilon > 0$, choose $0 < \sigma_L \leq \sigma_U$ such that all
    non-zero singular values of $A$ are in $[\sigma_L, \sigma_U]$ and let $d =
    (\sigma_U^2+\sigma_L^2)/2$ and $c = (\sigma_U^2-\sigma_L^2)/2$.

    \STATE Let $x = \mathbf{0}$, $v = \mathbf{0}$, and $r = b$.

    \FOR{$k = 0, 1, \ldots, \left\lceil \left( \log \varepsilon - \log 2 \right) / \log \frac{ \sigma_U - \sigma_L }{ \sigma_U + \sigma_L } \right\rceil $} 

    \STATE $\beta \leftarrow
    \begin{cases}
      0 & \text{if } k = 0, \\
      \frac{1}{2} (c/d)^2 & \text{if } k = 1, \\
      (\alpha c/2)^2 & \text{otherwise,}
    \end{cases}$ 
    \quad 
    $\alpha \leftarrow
    \begin{cases}
      1/d & \text{if } k = 0, \\
      d-c^2/(2d) & \text{if } k = 1, \\
      1 / ( d - \alpha c^2 / 4 ) & \text{otherwise}.
    \end{cases}$

    \STATE $v \leftarrow \beta v + A\T r$

    \STATE $x \leftarrow x + \alpha v$

    \STATE $r \leftarrow r - \alpha A v$

    \ENDFOR
  \end{algorithmic}
\end{algorithm}

\subsection{Running time complexity}
\label{subsec:complexity}

In this section, we discuss the running time complexity of \texttt{LSRN}.  Let's
first calculate the computational cost of \texttt{LSRN}
(Algorithm~\ref{alg:ls_randn_tall}) in terms of floating-point operations
(flops).  Note that we need only $\tilde{\Sigma}$ and $\tilde{V}$ but not
$\tilde{U}$ or a full SVD of $\tilde{A}$ in step 4 of
Algorithm~\ref{alg:ls_randn_tall}. In step 6, we assume that the dominant cost
per iteration is the cost of applying $AN$ and $(AN)\T$. Then the total cost is
given by
\begin{eqnarray*}
  \label{eq:lsrn_flops}
  & s m  \times \text{flops}(\text{randn}) & \quad \text{for generating } G \\ 
  \null + \, & s \times \text{flops}(A\T u) & \quad \text{for computing } \tilde{A} \\
  \null + \, & 2 s n^2 + 11 n^3 & \quad \text{for computing } \tilde{\Sigma} \text{ and }  \tilde{V} \text{ \cite[p.\,254]{golub1996matrix}} \\
  \null + \, & N_{\text{iter}}  \times ( \text{flops}(A v) + \text{flops}(A\T u) + 4 n r) & \quad \text{for solving $\min_y \| A N y - b \|_2$},
\end{eqnarray*}
where lower-order terms are ignored. 
Here, $\text{flops}(\text{randn})$ is the average flop count to generate a 
sample from the standard normal distribution, while $\text{flops}(A v)$ and 
$\text{flops}(A\T u)$ are the flop counts for the respective matrix-vector 
products. 
If $A$ is a dense matrix, then we have 
$\text{flops}( A v ) = \text{flops}( A\T u ) = 2 m n$. 
Hence, the total cost becomes
\begin{equation*}
  \text{flops}(\texttt{LSRN}_{\text{dense}}) = s m \, \text{flops}(\text{randn}) + 2 s m n + 2 s n^2 + 11 n^3 + N_{\text{iter}} \times ( 4 m n + 4 n r ).
\end{equation*}
Comparing this with the SVD approach, which uses $2 m n^2 + 11 n^3$ flops, we
find \texttt{LSRN} requires more flops, even if we only consider computing
$\tilde{A}$ and its SVD.  However, the actual running time is not fully
characterized by the number of flops.  A matrix-matrix multiplication is much
faster than an SVD with the same number of flops.  We empirically compare the
running time in Section \ref{sec:experiments}.  If $A$ is a sparse matrix, we
generally have $\text{flops}(A v)$ and $\text{flops}(A\T u)$ of order
$\mathcal{O}(m)$. In this case, \texttt{LSRN} should run considerably faster
than the SVD approach. Finally, if $A$ is an operator, it is hard to apply SVD,
while \texttt{LSRN} still works without any modification. If we set $\gamma = 2$
and $\varepsilon = 10^{-14}$, we know $N_{\text{iter}} \approx 100$ by Theorem
\ref{thm:iter} and hence \texttt{LSRN} needs approximately $2n + 200$
matrix-vector multiplications.

One advantage of \texttt{LSRN} is that the stages of generating $G$ and 
computing $\tilde{A} = G A$ are embarrassingly parallel. 
Thus, it is easy to implement \texttt{LSRN} in parallel. 
For example, on a shared-memory machine using $p$ cores, the total running
time decreases~to
\begin{equation}
  \label{eq:mt_time}
  T_{\texttt{LSRN}}^{\text{mt},p} = T_{\text{randn}}/p + T_{\text{mult}}/p + T_{\text{svd}}^{\text{mt},p} + T_{\text{iter}}/p,
\end{equation}
where $T_{\text{randn}}$, $T_{\text{mult}}$, and $T_{\text{iter}}$ are the 
running times for the respective stages if \texttt{LSRN} runs on a single 
core, $T_{\text{svd}}^{\text{mt},p}$ is the running time of SVD using $p$ 
cores, and communication cost among threads is ignored. 
Hence, multi-threaded \texttt{LSRN} has very good scalability with 
near-linear speedup. 

Alternatively, let us consider a cluster of size $p$ using MPI, where each node
stores a portion of rows of $A$ (with $m \gg n$). Each node can generate random
samples and do the multiplication independently, and then an MPI\_Reduce
operation is needed to obtain $\tilde{A}$. Since $n$ is small, the SVD of
$\tilde{A}$ and the preconditioner $N$ are computed on a single node and
distributed to all the other nodes via an MPI\_Bcast operation. If the CS method
is chosen as the iterative solver, we need one MPI\_Allreduce operation per
iteration in order to apply $A\T$. Note that all the MPI operations that
\texttt{LSRN} uses are collective. If we assume the cluster is homogeneous and
has perfect load balancing, the time complexity to perform a collective
operation should be $\mathcal{O}(\log p)$. Hence the total running time becomes
\begin{equation}
  \label{eq:mpi_time}
  T_{\texttt{LSRN}}^{\text{mpi},p} = T_{\text{randn}}/p + T_{\text{mult}}/p + T_{\text{svd}} + T_{\text{iter}}/p + ( C_1 + C_2 N_{\text{iter}} ) \mathcal{O}( \log p ),
\end{equation}
where $C_1$ corresponds to the cost of computing $\tilde{A}$ and broadcasting
$N$, and $C_2$ corresponds to the cost of applying $A\T$ at each
iteration. Therefore, the MPI implementation of \texttt{LSRN} still has good
scalability as long as $T_{\text{svd}}$ is not dominant, i.e., as long as
$\tilde{A}$ is not too big.  Typical values of $n$ (or $m$ for under-determined
problems) in our empirical evaluations are around $1000$, and thus this is the
case.

\section{Tikhonov regularization}
\label{sec:regularization}

We point out that it is easy to extend \texttt{LSRN} to handle certain types of
Tikhonov regularization, also known as ridge regression.  Recall that Tikhonov
regularization involves solving the problem
\begin{equation}
  \label{eq:l2_reg}
  \text{minimize} \quad \frac{1}{2} \| A x - b \|_2^2 + \frac{1}{2} \| W x \|_2^2,
\end{equation}
where $W \in \mathbb{R}^{n \times n}$ controls the regularization term.  In many
cases, $W$ is chosen as $\lambda I_n$ for some value of a regularization
parameter $\lambda > 0$.  It is easy to see that \eqref{eq:l2_reg} is equivalent
to the following LS problem, without any regularization:
\begin{equation}
  \label{eq:l2_reg_eq}
  \text{minimize} \quad \frac{1}{2} \left\|
    \begin{pmatrix}
      A \\ W 
    \end{pmatrix}
    x - 
    \begin{pmatrix}
      b \\ 0
    \end{pmatrix} \right\|_2^2.
\end{equation}
This is an over-determined problem of size $(m+n) \times n$.  If $m \gg n$, then
we certainly have $m+n \gg n$.  Therefore, if $m \gg n$, we can directly apply
\texttt{LSRN} to \eqref{eq:l2_reg_eq} in order to solve \eqref{eq:l2_reg}.  On
the other hand, if $m \ll n$, then although \eqref{eq:l2_reg_eq} is still
over-determined, it is ``nearly square,'' in the sense that $m+n$ is only
slightly larger than $n$.  In this regime, random sampling methods and random
projection methods like \texttt{LSRN} do not perform well. In order to deal with
this regime, note that \eqref{eq:l2_reg} is equivalent to
\begin{eqnarray*}
  \text{minimize} &\quad& \frac{1}{2} \| r \|_2^2 + \frac{1}{2} \| W x \|_2^2 \\
  \text{subject to} &\quad& A x + r = b,
\end{eqnarray*}
where $r = b - A x$ is the residual vector.  (Note that we use $r$ to denote the
matrix rank in a scalar context and the residual vector in a vector context.)
By introducing $z = W x$ and assuming that $W$ is non-singular, we can re-write
the above problem as
\begin{eqnarray*}
  \text{minimize} &\quad& \frac{1}{2} \left\|
  \begin{pmatrix}
    z \\ r
  \end{pmatrix}
\right\|_2^2 \\
\text{subject to} &\quad& 
  \begin{pmatrix}
    A W^{-1} & I_m
  \end{pmatrix}
  \begin{pmatrix}
    z \\ r
  \end{pmatrix}
  =
  b,
\end{eqnarray*}
i.e., as computing the min-length solution to
\begin{equation}
  \label{eq:l2_reg_eq_under}
  \begin{pmatrix}
    A W^{-1} & I_m
  \end{pmatrix}
  \begin{pmatrix}
    z \\ r
  \end{pmatrix}
  =
  b.
\end{equation}
Note that \eqref{eq:l2_reg_eq_under} is an under-determined problem of size $m
\times (m+n)$. Hence, if $m \ll n$, we have $m \ll m+n$ and we can use
\texttt{LSRN} to compute the min-length solution to \eqref{eq:l2_reg_eq_under},
denoted by {\scriptsize $\begin{pmatrix} z^* \\ r^* \end{pmatrix}$}.  The solution to the
original problem \eqref{eq:l2_reg} is then given by $x^* = W^{-1} z^*$.  Here,
we assume that $W^{-1}$ is easy to apply, as is the case when $W=\lambda I_n$,
so that $A W^{-1}$ can be treated as an operator.
The equivalence between \eqref{eq:l2_reg} and \eqref{eq:l2_reg_eq_under} was 
first established by Herman, Lent, and Hurwitz~\cite{herman1980storage}.

In most applications of regression analysis, the amount of regularization, e.g.,
the optimal regularization parameter, is unknown and thus determined by
cross-validation.  This requires solving a sequence of LS problems where only
$W$ differs.  For over-determined problems, we only need to perform a random
normal projection on $A$ once.  The marginal cost to solve for each $W$ is the
following: a random normal projection on $W$, an SVD of size $\lceil \gamma n
\rceil \times n$, and a predictable number of iterations.  Similar results hold
for under-determined problems when each $W$ is a multiple of the identity
matrix.

\section{Numerical experiments}
\label{sec:experiments}

We implemented our LS solver \texttt{LSRN} and compared it with
competing solvers: LAPACK's DGELSD, MATLAB's backslash, and Blendenpik by Avron,
Maymounkov, and Toledo~\cite{avron2010blendenpik}.  MATLAB's backslash uses
different algorithms for different problem types. For sparse rectangular
systems, as stated by Tim
Davis\footnote{\url{http://www.cise.ufl.edu/research/sparse/SPQR/}},
``SuiteSparseQR \cite{davis2006direct,davis2008algorithm} is now QR in MATLAB
7.9 and $x=A \backslash b$ when $A$ is sparse and rectangular.''  Table
\ref{tab:lsq_solvers} summarizes the properties of those solvers.  We report our
empirical results in this section.

\begin{table}
  \centering
  \caption{LS solvers and their properties.}
  \begin{tabular}{c|c|c|c|c}
    \multirow{2}{*}{solver} & \multicolumn{2}{|c|}{min-len solution to} & \multicolumn{2}{|c}{taking advantage of} \\ 
    & under-det? & rank-def? & sparse $A$ & operator $A$ \\
    \hline
    LAPACK's DGELSD & yes & yes & no & no \\
    MATLAB's backslash & no & no & yes &  no \\
    Blendenpik & yes & no & no & no \\
    \texttt{LSRN} & yes & yes & yes & yes 
  \end{tabular}
  \label{tab:lsq_solvers}
\end{table}

\subsection{Implementation and system setup}

The experiments were performed on either a local shared-memory machine or a
virtual cluster hosted on Amazon's Elastic Compute Cloud (EC2). The
shared-memory machine has $12$ Intel Xeon CPU cores at clock rate 2GHz with
128GB RAM. The virtual cluster consists of $20$ m1.large instances configured by
a third-party tool called
StarCluster\footnote{\url{http://web.mit.edu/stardev/cluster/}}. An m1.large
instance has $2$ virtual cores with $2$ EC2 Compute Units\footnote{``One EC2
  Compute Unit provides the equivalent CPU capacity of a 1.0-1.2 GHz 2007
  Opteron or 2007 Xeon processor.'' from \url{http://aws.amazon.com/ec2/faqs/}}
each. To attain top performance on the shared-memory machine, we implemented a
multi-threaded version of \texttt{LSRN} in C, and to make our solver general
enough to handle large problems on clusters, we also implemented an MPI version
of \texttt{LSRN} in Python with NumPy, SciPy, and mpi4py. Both packages are
available for
download\footnote{\url{http://www.stanford.edu/group/SOL/software/lsrn.html}}. We
use the multi-threaded implementation to compare \texttt{LSRN} with other LS
solvers and use the MPI implementation to explore scalability and to compare
iterative solvers under a cluster environment. To generate values from the
standard normal distribution, we adopted the code from Marsaglia and
Tsang~\cite{marsaglia2000ziggurat} and modified it to use threads; this can
generate a billion samples in less than two seconds on the shared-memory
machine.  We also modified Blendenpik to call multi-threaded FFTW
routines. Blendenpik's default settings were used, i.e., using randomized
discrete Fourier transform and sampling $4 \min(m,n)$ rows/columns. All LAPACK's
LS solvers, Blendenpik, and \texttt{LSRN} are linked against MATLAB's own
multi-threaded BLAS and LAPACK libraries. So, in general, this is a fair setup
because all the solvers can use multi-threading automatically and are linked
against the same BLAS and LAPACK libraries.  The running times were
measured in wall-clock times.

\subsection{$\kappa(AN)$ and number of iterations}
\label{sec:condition-number} 

Recall that Theorem \ref{thm:cond_bound} states that $\kappa(AN)$, the condition
number of the preconditioned system, is roughly bounded by
$(1+\sqrt{r/s})/(1-\sqrt{r/s})$ when $s$ is large enough such that we can ignore
$\alpha$ in practice. To verify this statement, we generate random matrices of
size $10^4 \times 10^3$ with condition numbers ranged from $10^2$ to $10^8$. The
left figure in Figure \ref{fig:cond_and_iter} compares $\kappa(AN)$ with
$\kappa_+(A)$, the effective condition number of $A$, under different choices of
$s$ and $r$. We take the largest value of $\kappa(AN)$ in $10$ independent runs
as the $\kappa(AN)$ in the plot. For each pair of $s$ and $r$, the corresponding
estimate $(1+\sqrt{r/s})/(1-\sqrt{r/s})$ is drawn in a dotted line of the same
color, if not overlapped with the solid line of $\kappa(AN)$. We see that
$(1+\sqrt{r/s})/(1-\sqrt{r/s})$ is indeed an accurate estimate of the upper
bound on $\kappa(AN)$.  Moreover, $\kappa(AN)$ is not only independent of
$\kappa_+(A)$, but it is also quite small. For example, we have
$(1+\sqrt{r/s})/(1-\sqrt{r/s}) < 6$ if $s > 2 r$, and hence we can expect super
fast convergence of CG-like methods.
\begin{figure}
  \centering
  \includegraphics[width=0.48\textwidth]{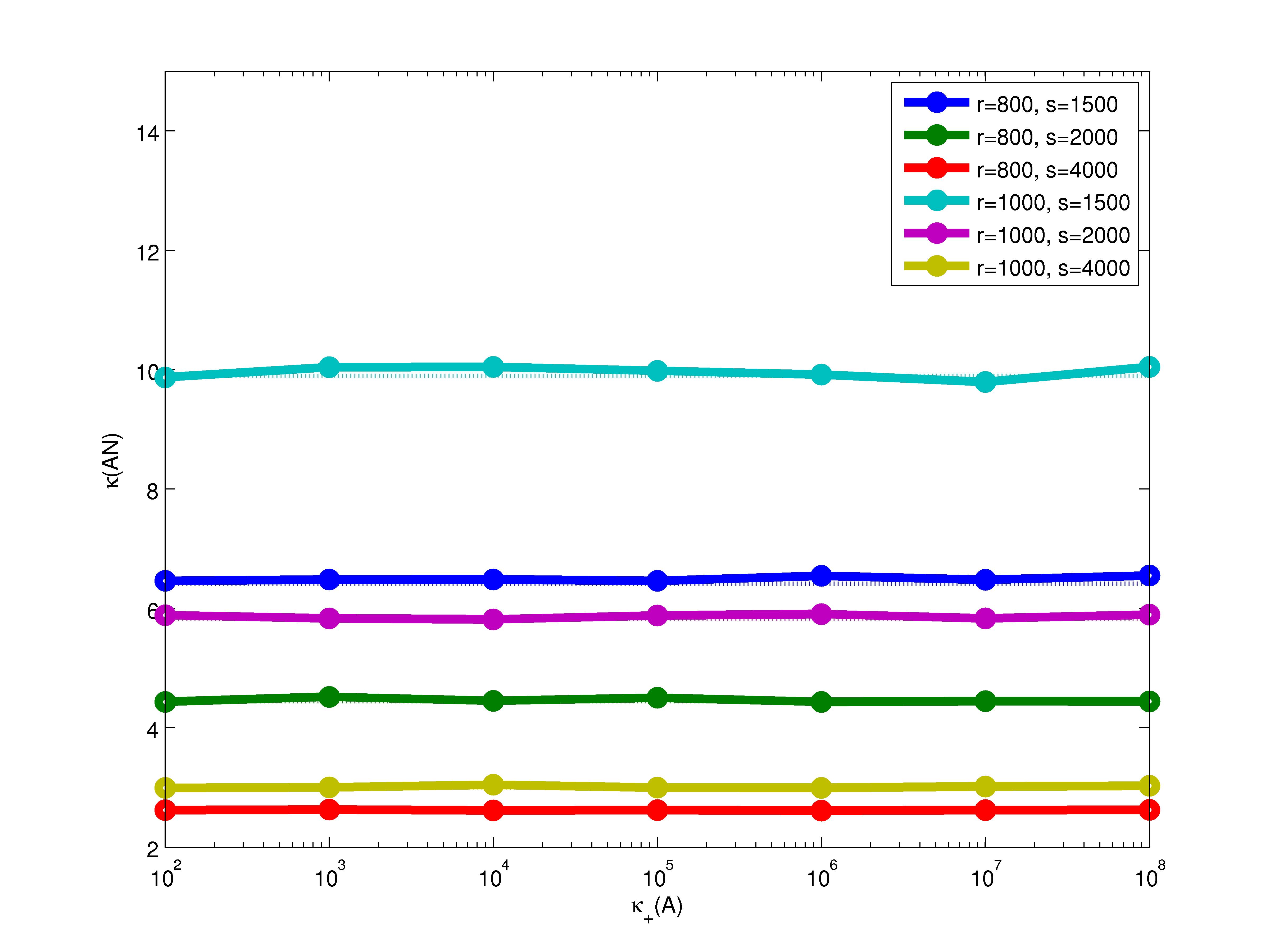}
  \includegraphics[width=0.48\textwidth]{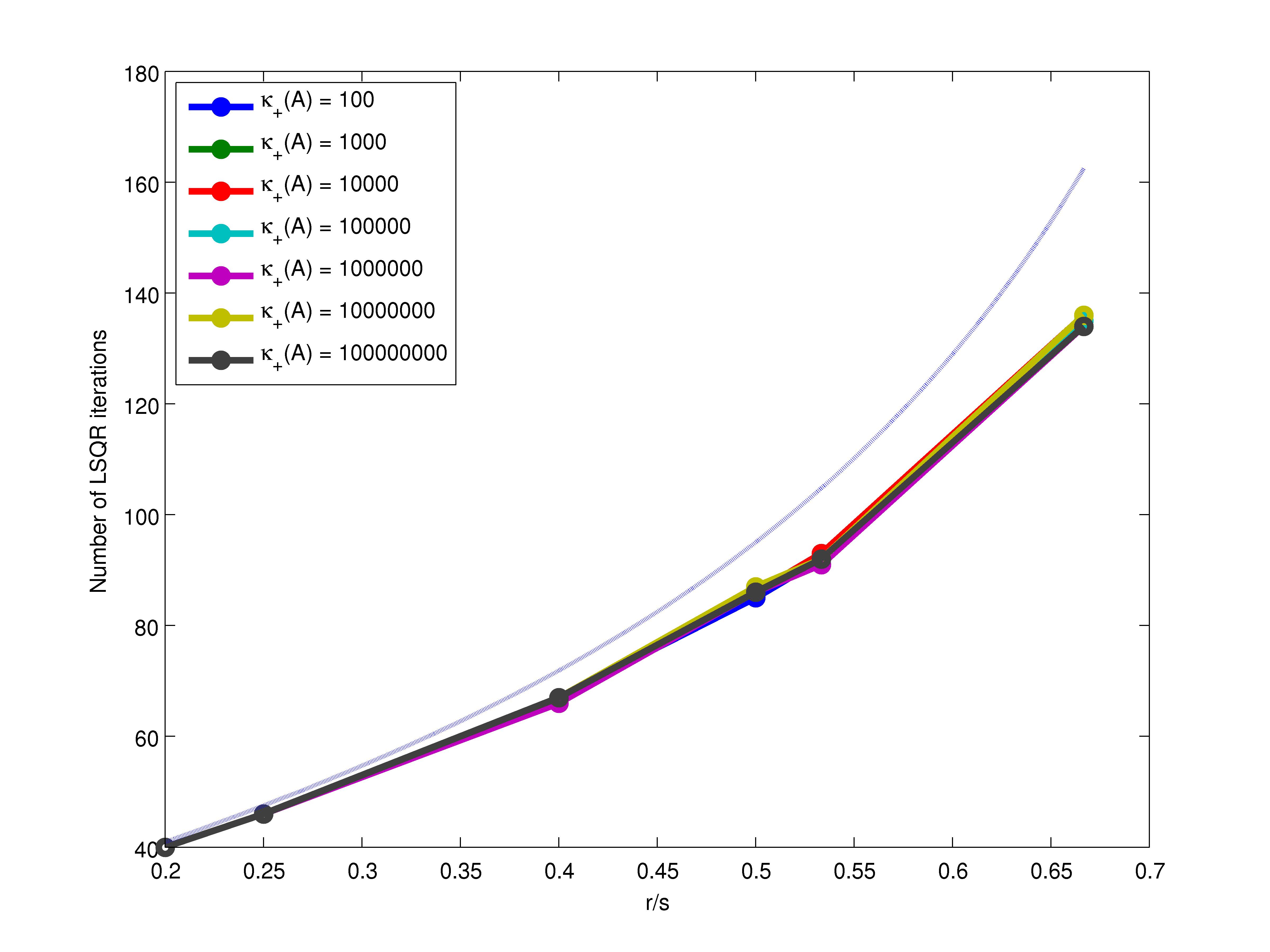}
  \caption{Left: $\kappa_+(A)$ vs.\ $\kappa(AN)$ for different choices of $r$
    and $s$. $A \in \mathbb{R}^{10^4 \times 10^3}$ is randomly generated with
    rank $r$. For each $(r,s)$ pair, we take the largest value of $\kappa(AN)$
    in 10 independent runs for each $\kappa_+(A)$ and connect them using a solid
    line. The estimate $(1+\sqrt{r/s})/(1-\sqrt{r/s})$ is drawn in a dotted line
    for each $(r,s)$ pair, if not overlapped with the corresponding solid
    line. Right: number of LSQR iterations vs.\ $r/s$. The number of LSQR
    iterations is merely a function of $r/s$, independent of the condition
    number of the original system.}
  \label{fig:cond_and_iter}
\end{figure}
Based on Theorem \ref{thm:iter}, the number of iterations should be less than
$(\log \varepsilon-\log 2)/\log \sqrt{r/s}$, where $\varepsilon$ is a given
tolerance. In order to match the accuracy of direct solvers, we set $\varepsilon
= 10^{-14}$. The right figure in Figure \ref{fig:cond_and_iter} shows the number
of LSQR iterations for different combinations of $r/s$ and $\kappa_+(A)$. Again,
we take the largest iteration number in $10$ independent runs for each pair of
$r/s$ and $\kappa_+(A)$. We also draw the theoretical upper bound $(\log
\varepsilon - \log 2)/\log \sqrt{r/s}$ in a dotted line. We see that the number of
iterations is basically a function of $r/s$, independent of $\kappa_+(A)$, and
the theoretical upper bound is very good in practice.  This confirms that the
number of iterations is fully predictable given $\gamma$.

\subsection{Tuning the oversampling factor $\gamma$}
\label{sec:tuning-parameters}

Once we set the tolerance and maximum number of iterations, there is only one
parameter left: the oversampling factor $\gamma$. To demonstrate the impact of
$\gamma$, we fix problem size to $10^5 \times 10^3$ and condition number to
$10^6$, set the tolerance to $10^{-14}$, and then solve the problem with
$\gamma$ ranged from $1.2$ to $3$. Figure \ref{fig:tuning-s} illustrates how
$\gamma$ affects the running times of \texttt{LSRN}'s stages: \texttt{randn} for
generating random numbers, \texttt{mult} for computing $\tilde{A} = G A$,
\texttt{svd} for computing $\tilde{\Sigma}$ and $\tilde{V}$ from $\tilde{A}$,
and \texttt{iter} for LSQR. We see that, the running times of \texttt{randn},
\texttt{mult}, and \texttt{svd} increase linearly as $\gamma$ increases, while
\texttt{iter} time decreases. Therefore there exists an optimal choice of
$\gamma$. For this particular problem, we should choose $\gamma$ between $1.8$
and $2.2$.
\begin{figure}
  \centering
  \includegraphics[width=0.48\textwidth]{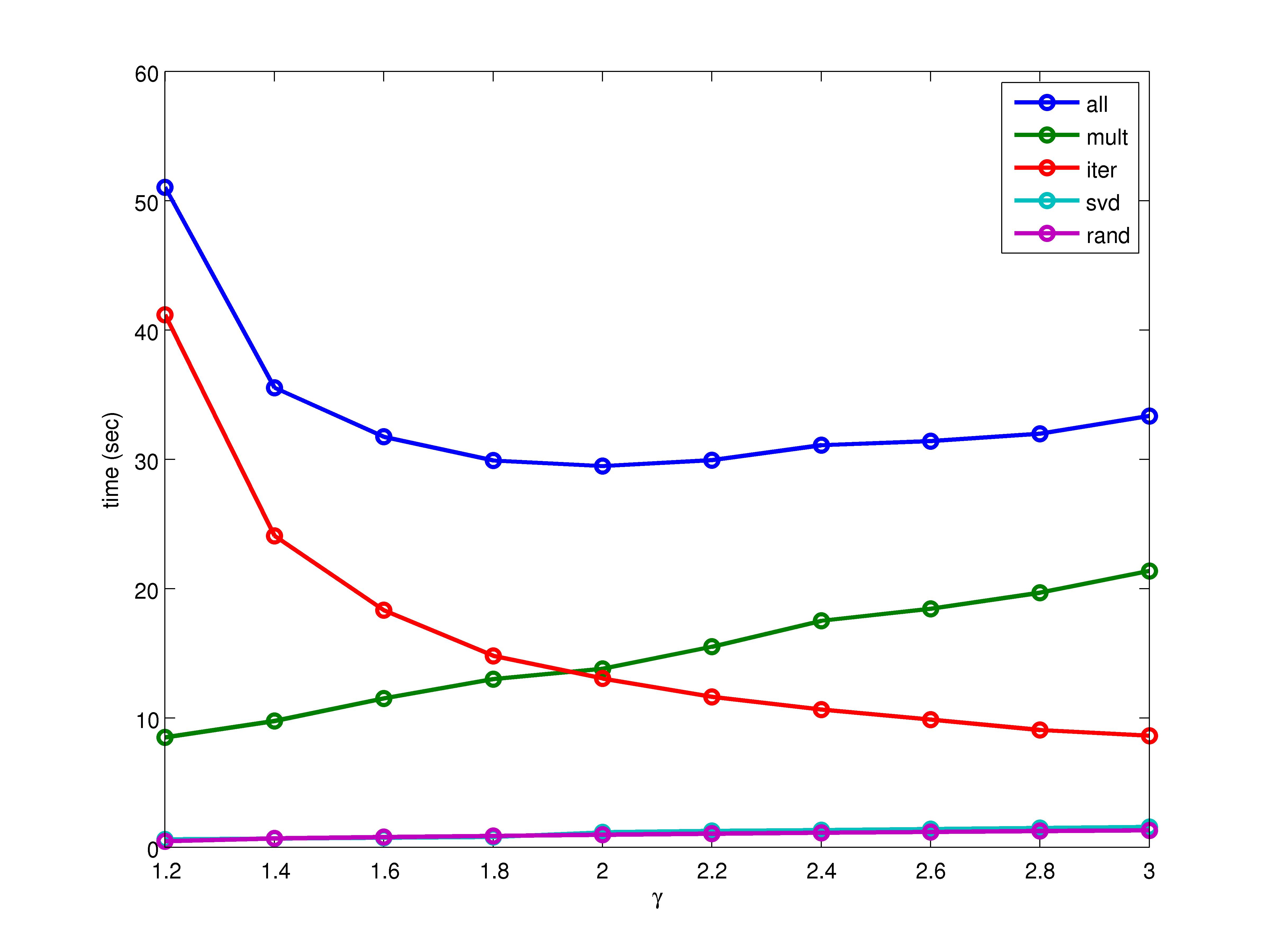}
  \caption{The overall running time of \texttt{LSRN} and the running time of
    each \texttt{LSRN} stage with different oversampling factor $\gamma$ for a
    randomly generated problem of size $10^5 \times 10^3$. For this particular
    problem, the optimal $\gamma$ that minimizes the overall running time lies
    in $[1.8, 2.2]$.}
  \label{fig:tuning-s}
\end{figure}
We experimented with various LS problems.  The best choice of $\gamma$ ranges
from $1.6$ to $2.5$, depending on the type and the size of the problem. We also
note that, when $\gamma$ is given, the running time of the iteration stage is
fully predictable. Thus we can initialize \texttt{LSRN} by measuring randn/sec
and flops/sec for matrix-vector multiplication, matrix-matrix multiplication,
and SVD, and then determine the best value of $\gamma$ by minimizing the total
running time \eqref{eq:mpi_time}.  For simplicity, we set $\gamma = 2.0$ in all
later experiments; although this is not the optimal setting for all cases, it is
always a reasonable choice.

\subsection{Dense least squares}
\label{sec:dense-lsq}

As the state-of-the-art dense linear algebra library, LAPACK provides several
routines for solving LS problems, e.g., DGELS, DGELSY, and
DGELSD. DGELS uses QR factorization without pivoting, which cannot handle
rank-deficient problems. DGELSY uses QR factorization with pivoting, which is
more reliable than DGELS on rank-deficient problems. DGELSD uses SVD. It is the
most reliable routine, and should be the most expensive as well. However, we
find that DGELSD actually runs much faster than DGELSY on strongly over- or
under-determined systems on the shared-memory machine. It may be because of
better use of multi-threaded BLAS, but we don't have a definitive explanation.

Figure \ref{fig:timing_dense_full_rank} compares the running times of
\texttt{LSRN} and competing solvers on randomly generated full-rank dense
strongly over- or under-determined problems. We set the condition numbers to
$10^6$ for all problems. Note that DGELS and DGELSD almost overlapped. The
results show that Blendenpik is the winner. For small-sized problems ($m \leq
3e4$), the follow-ups are DGELS and DGELSD. When the problem size goes larger,
\texttt{LSRN} becomes faster than DGELS/DGELSD. DGELSY is always slower than
DGELS/DGELSD, but still faster than MATLAB's backslash. The performance of
LAPACK's solvers decreases significantly for under-determined problems. We
monitored CPU usage and found that they couldn't fully use all the CPU cores,
i.e., they couldn't effectively call multi-threaded BLAS. Though still the best,
the performance of Blendenpik also decreases. \texttt{LSRN}'s performance does
not change much.

\begin{figure}
  \centering
  \includegraphics[width=0.48\textwidth]{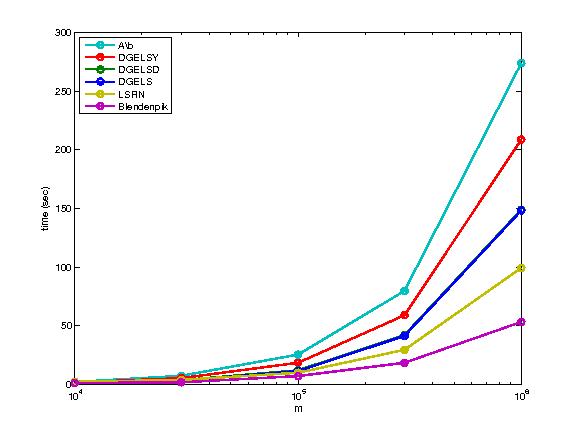}
  \includegraphics[width=0.48\textwidth]{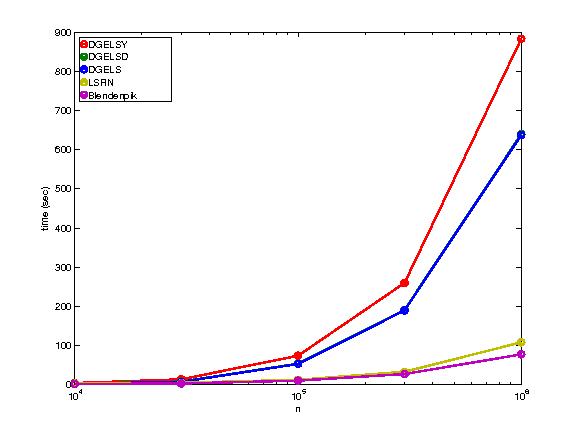}
  \caption{Running times on $m \times 1000$ dense over-determined problems with
    full rank (left) and on $1000 \times n$ dense under-determined problems with
    full rank (right).  Note that DGELS and DGELSD almost overlap. When $m >
    3e4$, we have Blendenpik $>$ \texttt{LSRN} $>$ DGELS/DGELSD $>$ DGELSY $>$
    $A\backslash b$ in terms of speed. On under-determined problems, LAPACK's
    performance decreases significantly compared with the over-determined
    cases. Blendenpik's performance decreases as well. \texttt{LSRN} doesn't
    change much.}
  \label{fig:timing_dense_full_rank}
\end{figure}

\texttt{LSRN} is also capable of solving rank-deficient problems, and in fact it
takes advantage of any rank-deficiency (in that it finds a solution in fewer
iterations).  Figure \ref{fig:timing_dense_rank_800} shows the results on over-
and under-determined rank-deficient problems generated the same way as in
previous experiments, except that we set $r = 800$. DGELSY and DGELSD remain the
same speed on over-determined problems as in full-rank cases, respectively, and
run slightly faster on under-determined problems. \texttt{LSRN}'s running times
reduce to $93$ seconds on the problem of size $10^6 \times 10^3$, from $100$
seconds on its full-rank counterpart.

\begin{figure}
  \centering
  \includegraphics[width=0.48\textwidth]{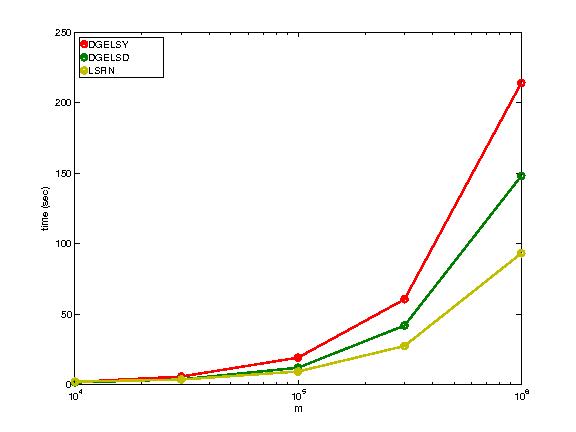}
  \includegraphics[width=0.48\textwidth]{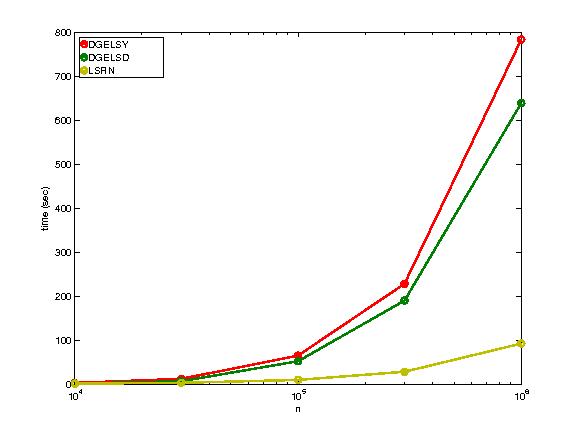}
  \caption{Running times on $m \times 1000$ dense over-determined problems with
    rank $800$ (left) and on $1000 \times n$ dense under-determined problems
    with rank $800$ (right). \texttt{LSRN} takes advantage of rank
    deficiency. We have \texttt{LSRN} $>$ DGSLS/DGELSD $>$ DGELSY in terms of
    speed.}
  \label{fig:timing_dense_rank_800}
\end{figure}

We see that, for strongly over- or under-determined problems, DGELSD is the
fastest and most reliable routine among the LS solvers provided by
LAPACK. However, it (or any other LAPACK solver) runs much slower on
under-determined problems than on over-determined problems, while \texttt{LSRN}
works symmetrically on both cases.  Blendenpik is the fastest dense least
squares solver in our tests. Though it is not designed for solving
rank-deficient problems, Blendenpik should be modifiable to handle such problems
following Theorem~\ref{thm:ls_precond_sufficient}.  We also note that
Blendenpik's performance depends on the distribution of the row norms of $U$. We
generate test problems randomly so that the row norms of $U$ are homogeneous,
which is ideal for Blendenpik. When the row norms of $U$ are heterogeneous,
Blendenpik's performance may drop.  See Avron, Maymounkov, and
Toledo~\cite{avron2010blendenpik} for a more detailed~analysis.

\subsection{Sparse least squares}
\label{sec:sparse-lsq}

In \texttt{LSRN}, $A$ is only involved in the computation of matrix-vector and
matrix-matrix multiplications. Therefore \texttt{LSRN} accelerates automatically
when $A$ is sparse, without exploring $A$'s sparsity pattern. LAPACK does not
have any direct sparse LS solver. MATLAB's backslash uses SuiteSparseQR by Tim
Davis \cite{davis2008algorithm} when $A$ is sparse and rectangular; this
requires explicit knowledge of $A$'s sparsity pattern to obtain a sparse QR
factorization.

We generated sparse LS problems using MATLAB's ``sprandn'' function with density
$0.01$ and condition number $10^6$. All problems have full rank. Figure
\ref{fig:timing_sparse_full_rank} shows the results on over-determined
problems. LAPACK's solvers and Blendenpik basically perform the same as in the
dense case. DGELSY is the slowest among the three. DGELS and DGELSD still
overlap with each other, faster than DGELSY but slower than Blendenpik. We see
that MATLAB's backslash handles sparse problems very well. On the $10^6 \times
10^3$ problem, backslash's running time reduces to $55$ seconds, from $273$
seconds on the dense counterpart. The overall performance of MATLAB's backslash
is better than Blendenpik's. \texttt{LSRN}'s curve is very flat. For small
problems ($m \leq 10^5$), \texttt{LSRN} is slow. When $m > 10^5$, \texttt{LSRN}
becomes the fastest solver among the six. \texttt{LSRN} takes only $23$ seconds
on the over-determined problem of size $10^6 \times 10^3$. On large
under-determined problems, \texttt{LSRN} still leads by a huge margin.
\begin{figure}
  \centering
  \includegraphics[width=0.48\textwidth]{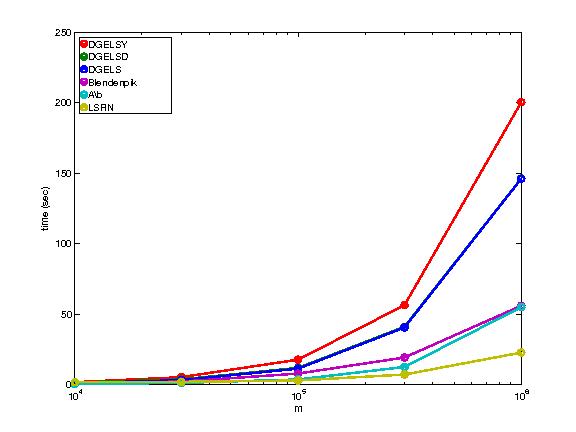}
  \includegraphics[width=0.48\textwidth]{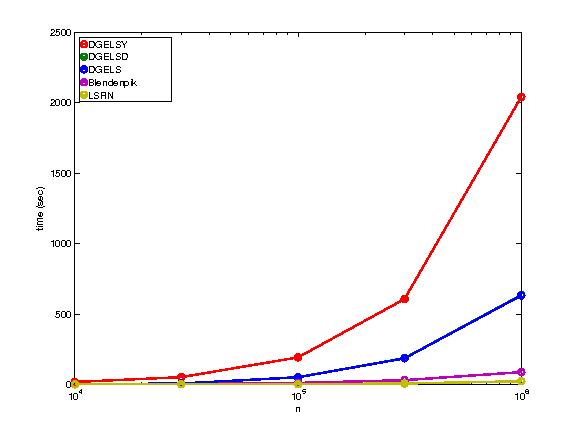}
  \caption{Running times on $m \times 1000$ sparse over-determined problems with
    full rank (left) and on $1000 \times n$ sparse under-determined problems
    with full rank (right). DGELS and DGELSD overlap with each other. LAPACK's
    solvers and Blendenpik perform almost the same as in the dense
    case. {\sc Matlab}'s backslash speeds up on sparse problems, and performs a
    little better than Blendenpik, but it is still slower than
    \texttt{LSRN}. \texttt{LSRN} leads by a huge margin on under-determined
    problems as well.}
  \label{fig:timing_sparse_full_rank}
\end{figure}

\texttt{LSRN} makes no distinction between dense and sparse problems.  The
speedup on sparse problems is due to faster matrix-vector and matrix-matrix
multiplications.  Hence, although no test was performed, we expect a similar
speedup on fast linear operators as well.  Also note that, in the multi-threaded
implementation of \texttt{LSRN}, we use a naive multi-threaded routine for
sparse matrix-vector and matrix-matrix multiplications, which is far from
optimized and thus leaves room for improvement.

\subsection{Real-world problems}
\label{sec:real-world-prob}

In this section, we report results on some real-world large data problems.  The
problems are summarized in Table \ref{tab:real-world-prob}, along with running
times.
\begin{table}
  \centering
  \caption{Real-world problems and corresponding running times in
    seconds. DGELSD doesn't take advantage of sparsity. Though MATLAB's backslash
    (SuiteSparseQR) may not give the min-length solutions to rank-deficient or
    under-determined problems, we still report its running times. Blendenpik either
    doesn't apply to rank-deficient problems or runs out of memory
    (OOM). \texttt{LSRN}'s running time is mainly determined by the problem size
    and the sparsity.}
  \scriptsize
  \newcommand{\z}{\phantom0}
  \begin{tabular}{l||c|c|c|r|c||r|r|c|c}
    matrix & $m$ & $n$ & nnz & rank & cond & DGELSD & $A \backslash b\ \ \ $ & Blendenpik & \texttt{LSRN} \\
    \hline
    \texttt{landmark} & 71952 & 2704 & 1.15e6 & 2671 & 1.0e8 & 29.54\z & 0.6498$^*$ &  - &  17.55 \\
    \texttt{rail4284} & 4284 & 1.1e6 & 1.1e7 & full & 400.0 & $>3600$\z & $1.203^*$ & OOM & 136.0 \\
    \hline
    \texttt{tnimg\_1} &\z951 & 1e6 & 2.1e7 & 925 & - & 630.6\z & $1067^*$ & - & 36.02 \\
    \texttt{tnimg\_2} & 1000 & 2e6 & 4.2e7 & 981 & - & 1291\z & $>3600^*$ & - & 72.05 \\
    \texttt{tnimg\_3} & 1018 & 3e6 & 6.3e7 & 1016 & - & 2084\z & $>3600^*$ & - & 111.1 \\
    \texttt{tnimg\_4} & 1019 & 4e6 & 8.4e7 & 1018 & - &  2945\z & $>3600^*$ & - & 147.1 \\
    \texttt{tnimg\_5} & 1023 & 5e6 & 1.1e8 & full &  - & $>3600$\z & $>3600^*$ & OOM & 188.5 \\
  \end{tabular}
  \label{tab:real-world-prob}
\end{table}

\texttt{landmark} and \texttt{rail4284} are from the University of Florida
Sparse Matrix Collection \cite{davis1997university}. \texttt{landmark}
originated from a rank-deficient LS problem. \texttt{rail4284} has full rank and
originated from a linear programming problem on Italian railways. Both matrices
are very sparse and have structured patterns. MATLAB's backslash (SuiteSparseQR)
runs extremely fast on these two problems, though it doesn't guarantee to return
the min-length solution. Blendenpik is not designed to handle the rank-deficient
\texttt{landmark}, and it unfortunately runs out of memory (OOM) on
\texttt{rail4284}. \texttt{LSRN} takes 17.55 seconds on \texttt{landmark} and
136.0 seconds on \texttt{rail4284}. DGELSD is slightly slower than \texttt{LSRN}
on \texttt{landmark} and much slower on \texttt{rail4284}.

\texttt{tnimg} is generated from the TinyImages collection
\cite{torralba2008tiny}, which provides $80$ million color images of size $32
\times 32$. For each image, we first convert it to grayscale, compute its
two-dimensional DCT, and then only keep the top $2\%$ largest coefficients in
magnitude. This gives a sparse matrix of size $1024 \times 8\mathrm{e}7$ where
each column has $20$ or $21$ nonzero elements. Note that \texttt{tnimg} doesn't
have apparent structured pattern. Since the whole matrix is too big, we work on
submatrices of different sizes. \texttt{tnimg}\_$i$ is the submatrix consisting
of the first $10^6 \times i$ columns of the whole matrix for $i=1,\ldots,80$,
where empty rows are removed. The running times of \texttt{LSRN} are
approximately linear in $n$. Both DGELSD and MATLAB's backslash are very slow on
the \texttt{tnimg} problems. Blendenpik either doesn't apply to the
rank-deficient cases or runs OOM.

We see that, though both methods taking advantage of sparsity, MATLAB's
backslash relies heavily on the sparsity pattern, and its performance is
unpredictable until the sparsity pattern is analyzed, while \texttt{LSRN}
doesn't rely on the sparsity pattern and always delivers predictable performance
and, moreover, the min-length solution.

\subsection{Scalability and choice of iterative solvers on clusters}
\label{sec:scalability}

In this section, we move to the Amazon EC2 cluster. The goals are to
demonstrate that
\begin{inparaenum} 
\item[(1)] \texttt{LSRN} scales well on clusters, and
\item[(2)] the CS method is preferred to LSQR on clusters
  with high communication cost.
\end{inparaenum}
The test problems are submatrices of the \texttt{tnimg} matrix in the previous
section: \texttt{tnimg}\_4, \texttt{tnimg}\_10, \texttt{tnimg}\_20, and
\texttt{tnimg}\_40, solved with $4$, $10$, $20$, and $40$ cores
respectively. Each process stores a submatrix of size $1024 \times
1\mathrm{e}6$. Table \ref{tab:cluster} shows the results, averaged over $5$
runs.
\begin{table}
  \centering
  \caption{Test problems on the Amazon EC2 cluster and corresponding running
    times in seconds.  When we enlarge the problem scale by a factor of $10$ and
    increase the number of cores accordingly, the running time only increases by a
    factor of $50\%$. It shows \texttt{LSRN}'s good scalability.  Though the CS
    method takes more iterations, it is faster than LSQR by saving communication
    cost.}
  \small
  \newcommand{\z}{\phantom0}
  \begin{tabular}{l|c|c|l|c|c|c|c|c|c}
    \ \ \ \ \ solver & $N_{\text{nodes}}$ & np & matrix & $m$ & $n$ & nnz &  $N_{\text{iter}}$ & $T_{\text{iter}}$ & $T_{\text{total}}$ \\
    \hline
    \texttt{LSRN} w/ CS & \multirow{2}{*}{\z2} & \multirow{2}{*}{\z4} & \multirow{2}{*}{\texttt{tnimg}\_4} & \multirow{2}{*}{1024} & \multirow{2}{*}{4e6} & \multirow{2}{*}{8.4e7} & 106 & 34.03 & 170.4 \\
    \texttt{LSRN} w/ LSQR & & & & & & & \z84 & 41.14 & 178.6 \\
    \hline
    \texttt{LSRN} w/ CS & \multirow{2}{*}{\z5} & \multirow{2}{*}{10} &  \multirow{2}{*}{\texttt{tnimg}\_10} & \multirow{2}{*}{1024} & \multirow{2}{*}{1e7} & \multirow{2}{*}{2.1e8} & 106 & 50.37 & 193.3 \\
    \texttt{LSRN} w/ LSQR & & & & & & & \z84 & 68.72 & 211.6 \\
    \hline
    \texttt{LSRN} w/ CS & \multirow{2}{*}{10} & \multirow{2}{*}{20} &  \multirow{2}{*}{\texttt{tnimg}\_20} & \multirow{2}{*}{1024} & \multirow{2}{*}{2e7} & \multirow{2}{*}{4.2e8}   & 106 & 73.73 & 220.9 \\
    \texttt{LSRN} w/ LSQR & &  & &  & & & \z84 & 102.3 & 249.0 \\
    \hline
    \texttt{LSRN} w/ CS &  \multirow{2}{*}{20} & \multirow{2}{*}{40} &  \multirow{2}{*}{\texttt{tnimg}\_40} & \multirow{2}{*}{1024} & \multirow{2}{*}{4e7} & \multirow{2}{*}{8.4e8}  & 106 & 102.5 & 255.6 \\
    \texttt{LSRN} w/ LSQR & & & &  &  & & \z84 & 137.2 & 290.2 \\
  \end{tabular}
  \label{tab:cluster}
\end{table}
Ideally, from the complexity analysis \eqref{eq:mpi_time}, when we double $n$
and double the number of cores, the increase in running time should be a
constant if the cluster is homogeneous and has perfect load balancing (which we
have observed is not true on Amazon EC2).  For \texttt{LSRN} with CS, from
\texttt{tnimg}\_10 to \texttt{tnimg}\_20 the running time increases $27.6$
seconds, and from \texttt{tnimg}\_20 to \texttt{tnimg}\_40 the running time
increases $34.7$ seconds.  We believe the difference between the time increases
is caused by the heterogeneity of the cluster, because Amazon EC2 doesn't
guarantee the connection speed among nodes.  From \texttt{tnimg}\_4 to
\texttt{tnimg}\_40, the problem scale is enlarged by a factor of $10$ while the
running time only increases by a factor of $50\%$. The result still demonstrates
\texttt{LSRN}'s good scalability. We also compare the performance of LSQR and CS
as the iterative solvers in \texttt{LSRN}. For all problems LSQR converges in
$84$ iterations and CS converges in $106$ iterations. However, LSQR is slower
than CS. The communication cost saved by CS is significant on those tests. As a
result, we recommend CS as the default \texttt{LSRN} iterative solver for
cluster environments. Note that to reduce the communication cost on a cluster,
we could also consider increasing $\gamma$ to reduce the number of iterations.

\section{Conclusion}
\label{sec:conclusion}

We developed \texttt{LSRN}, a parallel solver for strongly over- or
under-determined, and possibly rank-deficient, systems.  \texttt{LSRN} uses
random normal projection to compute a preconditioner matrix for an iterative
solver such as LSQR and the Chebyshev semi-iterative (CS) method. The
preconditioning process is embarrassingly parallel and automatically speeds up
on sparse matrices and fast linear operators, and on rank-deficient data.  We
proved that the preconditioned system is consistent and extremely
well-conditioned, and derived strong bounds on the number of iterations of LSQR
or the CS method, and hence on the total running time. On large dense systems,
\texttt{LSRN} is competitive with the best existing solvers, and it runs
significantly faster than competing solvers on strongly over- or
under-determined sparse systems. \texttt{LSRN} is easy to implement using
threads or MPI, and it scales well in parallel environments.

\section*{Acknowledgements}

After completing the initial version of this manuscript, we learned of
the LS algorithm of Coakley et al.\ \cite{coakley2011fast}.
We thank Mark Tygert for pointing us to this reference.  We are also
grateful to Lisandro Dalcin, the author of mpi4py, for his own version
of the MPI\_Barrier function to prevent idle processes from
interrupting the multi-threaded SVD process too frequently.

\newpage

\bibliographystyle{siam}
\bibliography{lsrn}

\end{document}